\documentclass{llncs}
\usepackage{amsmath}
\usepackage{graphicx}
\usepackage{amssymb}
\usepackage{verbatim}
\usepackage{epsfig}
\usepackage{color}
\usepackage{url}
\usepackage{listings}
\usepackage{float}
\usepackage[caption=false]{subfig}

\usepackage{cite} 
\usepackage{authblk} 
\usepackage{appendix}
\usepackage{enumitem}
\usepackage{mathtools}

\spnewtheorem{Example}{Example}{\bfseries}{\rmfamily}
\spnewtheorem{observation}{Observation}{\bfseries}{\itshape}
\spnewtheorem{Observation}[observation]{Observation}{\bfseries}{\rmfamily}
\spnewtheorem{Algorithm}{Algorithm}{\bfseries}{\rmfamily}

\makeatletter
\setlength{\@fptop}{0pt}
\makeatother

\newcommand{\entry}{\mbox{\scriptsize \it entry}}
\newcommand{\exit}{\mbox{\scriptsize \it exit}}
\newcommand{\inside}{\mbox{\it inside}}
\newcommand{\outside}{\mbox{\it outside}}
\newcommand{\belongs}{\mbox{\it belongs}}
\newcommand{\dist}{\mbox{\it dist}}
\newcommand{\indeg}{\mbox{\it indeg}}

\renewcommand{\comment}[1]{}

\title{DAG-width of Control Flow Graphs with Applications to Model Checking\thanks{Research supported by NSERC}}

\author{Therese Biedl, Sebastian Fischmeister and Neeraj Kumar\thanks{Corresponding author}}
\institute{University of Waterloo, Canada \\ \email{\{biedl,sfischme,n26kumar\}@uwaterloo.ca}}
\date{}

\begin{document}
\maketitle
\begin{abstract}
The treewidth of control flow graphs arising from structured programs is known to be at most six.
However, as a control flow graph is inherently directed, it makes sense to consider a measure of
width for digraphs instead. We use the so-called DAG-width and show that the DAG-width of 
control flow graphs arising from structured (goto-free) programs is at most three. Additionally, we 
also give a linear time algorithm to compute the DAG decomposition of these control flow graphs.
One consequence of this result is that parity games (and hence the $\mu$-calculus model checking problem),
which are known to be tractable on graphs of bounded DAG-width, can be solved efficiently in practice 
on control flow graphs.
\end{abstract}

\section{Introduction}
Given a program $P$ and a property $\pi$, software verification concerns the problem of
determining whether or not $\pi$ is satisfied in all possible executions of $P$. The problem
can be formulated as an instance of the more general $\mu$-calculus model checking problem. Under
such a formulation, we specify the property $\pi$ as a $\mu$-calculus formula and evaluate it on
a model of the system under verification. The system is usually modeled as a state machine, like a
\emph{kripke structure}~\cite{kripke1963semantical}, where states are labeled with appropriate propositions
of the property $\pi$. For example, in the context of software systems, the \emph{control-flow graph} 
of the program is the kripke structure. The states represent the basic blocks in the program and 
the transitions represent the flow of control between them.

The complexity of the $\mu$-calculus model checking problem is still unresolved: the problem is
known to be in NP $\cap$ co-NP~\cite{EmersonJS01}, but a polynomial time algorithm has not been found. It is also known
that the problem is equivalent to deciding a winner in \emph{parity games}, a two-player game played
on a directed graph. More precisely, given a model $M$ and a formula $\phi$, we can construct a directed
graph $G$ on which the parity game is played~\cite{Stirling:2001}. Based on the winner of the game on $G$, we can determine
whether or not the formula $\phi$ satisfies the model $M$. Motivated by this development, parity games 
were extensively studied and efficient algorithms were found for special graph classes~\cite{obdrvzalek2003fast, berwanger2012dag, GanianHKLOR14}.

For software model checking, the result for graphs of bounded \emph{treewidth}~\cite{obdrvzalek2003fast} 
is of particular interest since the treewidth of control-flow graphs for structured (goto-free) programs
is at most 6 and this is tight~\cite{thorup1998all}. 
The proof comes with an associated tree decomposition 
(which is otherwise hard to find~\cite{ArnborgEtAl1987}).

Obdr\v{z}\'{a}lek \cite{obdrvzalek2003fast}
gave an algorithm for parity games on graphs of treewidth
at most $k$. The algorithm runs in $O(n \cdot k \cdot d^{2(k+1)^2})$ time, where $n$ is the number of vertices and
$d$ is the number of \emph{priorities} in the game. For the $\mu$-calculus model checking problem, the number
of priorities $d$ is equal to two plus the \emph{alternation depth} of the formula, which in turn is at most $m$ (the size of the formula).

In practice 
both $m$ and $d$ are usually quite small.
Since the treewidth of control-flow graphs is also small, Obdr\v{z}\'{a}lek believed that his result should give 
better algorithms for software model checking. However, as pointed out in~\cite{fearnley2011time}, the algorithm is far
from practical due to the large factor of $d^{2(k+1)^2}$. 
For example, 
the parity game (and hence model checking with a single sub-formula) on a control-flow graph of treewidth 6 will have a running 
time of $O(n \cdot d^{98})$. Fearnley and Schewe~\cite{fearnley2011time} have improved the run time for bounded treewidth graphs to 
$O(n \cdot (k+1)^{k+5} \cdot (d+1)^{3k+5})$. This brings the run time to $O(n \cdot 7^{11} \cdot (d+1)^{23})$, which still seems 
impractical. In the same paper, they also present an improved result for graphs of \emph{DAG-width} at most $k$, running 
in $O(n \cdot M \cdot k^{k+2} \cdot (d+1)^{3k+2})$ time. Here, $M$ is the number of edges in the DAG decomposition; no bound better than
$M \in O(n^{k+2})$ is known.

\paragraph{Contribution}
We observe that since treewidth is a measure for undirected graphs, explaining the structure of control-flow
graphs via treewidth is overly pessimistic, and by ignoring the directional properties of edges we may be losing
possibly helpful information. Moreover, software model checking could benefit from 
DAG-width based algorithms from~\cite{fearnley2011time,
DBLP:conf/csl/BojanczykDK14}, if we can find a DAG decomposition of control-flow graphs
with a small width and fewer edges.

To this end, we show that the DAG-width of control-flow graphs arising from structured (goto-free) programs 
is at most 3. Moreover, we also give a linear time algorithm to find the 
associated DAG decomposition with a linear number of edges.
Combining this with the results by Fearnley and Schewe~\cite{fearnley2011time}, parity games on control-flow graphs 
(and hence model checking with a single sub-formula) can be solved in $O(n^2 \cdot 3^{5} \cdot (d+1)^{11})$ time.
This is competitive with and probably more practical than the previous algorithms.

From a graph-theoretic perspective, it is desirable for a digraph width measure (see~\cite{GanianHKLOR14} for a brief survey)
to be small on many interesting instances. The above result makes a case for DAG-width by demonstrating that there are application
areas that benefit from efficient DAG-width based algorithms.

\paragraph{Outline} The remainder of the paper is organized as follows. Section~\ref{sec:prelims} reviews notation and 
background material, especially the \emph{cops and robbers game} and its relation to DAG-width.
The proof of the DAG-width bound follows in Section~\ref{sec:proof}. In Section~\ref{sec:decompose}, we discuss the algorithm to find
the associated DAG decomposition.
Finally, we conclude with Section~\ref{sec:conclusion}.

\section{Preliminaries}
\label{sec:prelims}


\subsection{Control-Flow Graphs} 

\begin{definition}
	The \emph{control-flow graph} of a program $P$ is a directed graph $G=(V, E)$, where a vertex $v \in V$ represents
	a statement $s_v \in P$ and an edge $(u, v) \in E$ represents the flow of control from $s_u$ to $s_v$ under some 
	execution of the program $P$. The vertices \textnormal{\texttt{start}} and \textnormal{\texttt{stop}} correspond
	to the first and last statements of the program, respectively.
\end{definition}

For the sake of having fewer vertices in $G$, most representations of control-flow graphs combine sequence of statements
without any branching into a \emph{basic block}. This is equivalent to contracting every edge $(u, v) \in E$, such that
the \emph{in-degree} and \emph{out-degree} of $v$ is $1$. 
Throughout the paper, we assume that the control-flow graphs are derived from structured (goto-free) programs.
This is done because with unrestricted gotos any digraph could 
be a control-flow graph and so they have no special 
width properties. We can construct a control-flow graph of a structured program by parsing the 
program top-down and expanding it recursively depending on the control structures as shown in Figure~\ref{fig:controlStructures}
(see~\cite{Aho1986} for more details). 
Following the same convention as~\cite{thorup1998all}, we note that the potential successors of a statement $S$ in the program $P$ are:
\begin{itemize}
	\item \emph{out}, the succeeding statement or construct
	\item \emph{exit}, the exit point the nearest surrounding loop (\texttt{break})
	\item \emph{entry}, the entry point of the nearest surrounding loop (\texttt{continue})
	\item \emph{stop}, the end of program (\texttt{return, exit})
\end{itemize}

\begin{figure}
	\centering
	\subfloat[\mbox{}]{\includegraphics[scale=0.54]{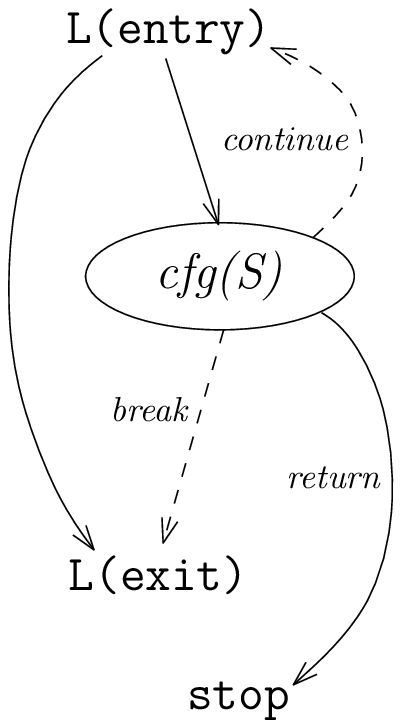}\label{fig:loopElement}} 
	\hfill
	\subfloat[\mbox{}]{\includegraphics[scale=0.51]{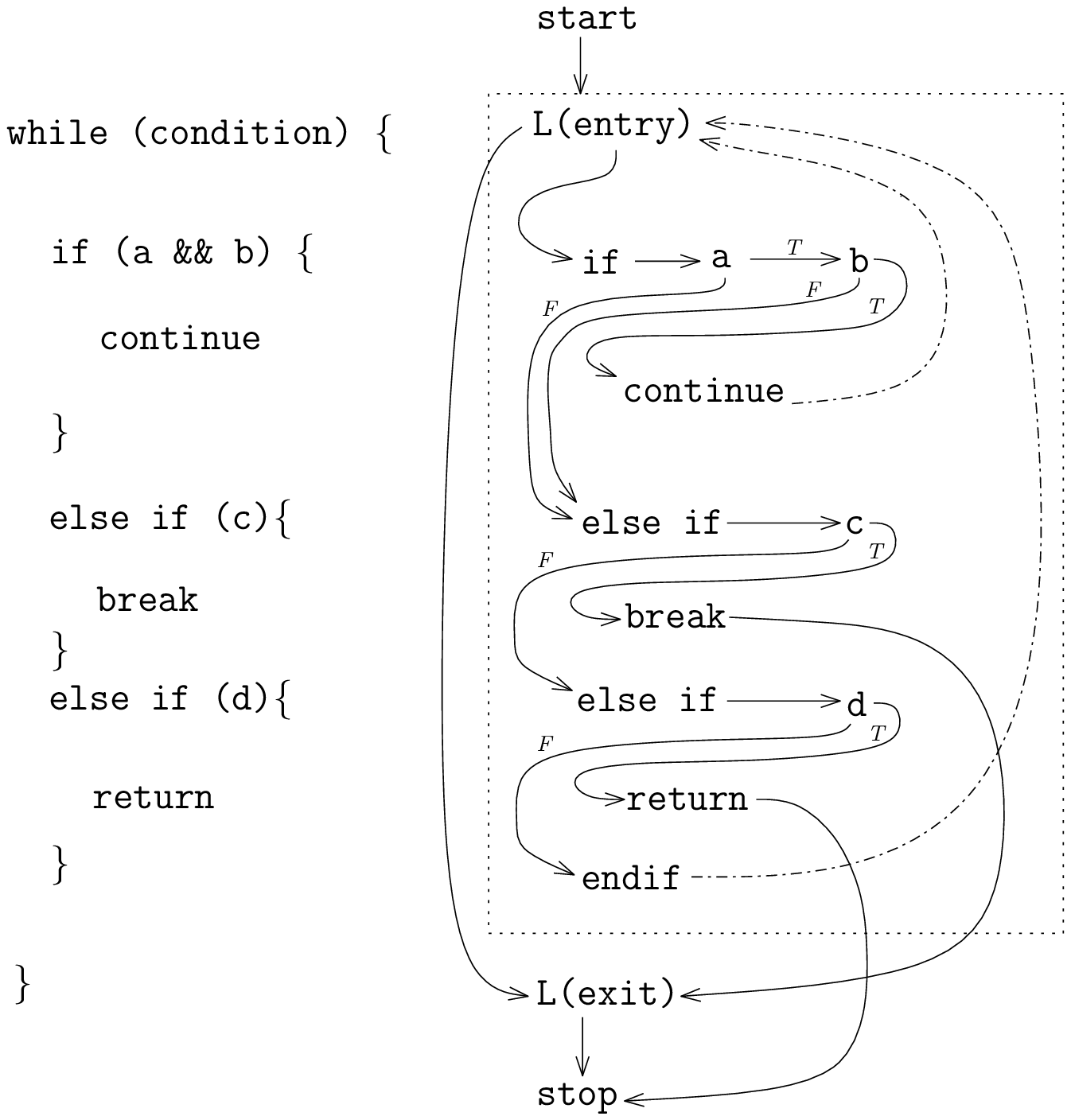}\label{fig:cfg-example} }
	\caption{Loop elements.  (a) The abstract structure.  
Dashed edges must start in $\belongs(L)$. (b) Sample-code and the resulting control-flow graph.
		Backward edges are dash-dotted. $\inside(L)$ is shown dotted.}
	\label{fig:controlStructures}
\end{figure}

\begin{definition} [Dominators] Let $G$ be a control-flow graph and $u, v \in V$. We say that $u$ \emph{dominates} $v$,
	if every directed path from \textnormal{\texttt{start}} to $v$ must go through $u$. Similarly, we say that
	$u$ \emph{post-dominates} $v$, if every directed path from $v$ to \textnormal{\texttt{stop}} must go through $u$.
\end{definition}

\begin{definition} [Loop Element]
	For every loop construct such as \textnormal{\texttt{do-while}}, \textnormal{\texttt{foreach}},
	we can construct an equivalent representation as a \emph{loop element} $L$, characterized
	by an entry point $L^{\entry}$ and an exit point $L^{\exit}$. See	Figure~\ref{fig:loopElement}.
\end{definition}

We note the following definitions and properties for loop elements:
\begin{itemize}
	\item We define $\inside(L)$ to be the set of vertices dominated by $L^{\entry}$
		and not dominated by $L^{\exit}$.
		Quite naturally, we define 
		$\outside(L)$ to be the set of vertices $(V \setminus \inside(L))$. 
		Note that $L^{\entry}\in \inside(L)$ 
		but $L^{\exit}\in \outside(L)$. Moreover, if we ignore edges to \texttt{stop},
		$L^{\exit}$ post-dominates vertices in $\inside(L)$.

	\item For the purpose of simplification, we assume $G$ to be enclosed in a hypothetical loop element $L_\phi$.
		This is purely notational and we do not add extra vertices or edges to $G$. We have
		$\inside(L_\phi) = V$ and $\outside(L_\phi) = \emptyset$.

	\item We say that a loop element $L_i$ is \emph{nested} under $L$, iff $L_i^{\entry} \in \inside(L)$. Two distinct loop
		elements are either nested or have disjoint insides.
	
	\item We can now associate every vertex of $G$ to a loop element as follows. We say that a vertex $v \in V$ \emph{belongs to}
		$L$ if and only if $L$ is the nearest loop element such that $L^{\entry}$ dominates $v$. More precisely, $v \in \belongs(L)$
		if and only if $v \in \inside(L)$, and there exists no $L_i$ nested under $L$ with $v \in \inside(L_i)$.
		
	\item Every $v \in V$ belongs to exactly one loop element. \texttt{start} and \texttt{stop} (as well as any vertices outside all loops of the program) belong to $L_\phi$.

	\item Finally, we say that a loop element $L_i$ is \emph{nested directly} under $L$, iff $L_i^{\exit} \in \belongs(L)$. In 
		other words, $L_i$ is nested under $L$ and there exists no $L_j$ nested under $L$ such that $L_i$ is nested under $L_j$.
\end{itemize}

We say that an edge $(u, v) \in E$ is a \emph{backward edge}, 
if $v$ dominates $u$; otherwise we call it a \emph{forward edge}. 
The following observations will be crucial:

\begin{lemma}\cite{allen1970control}
\label{lem:backward}
The backward edges are exactly those that lead from a vertex in $\belongs(L)$ to
$L^{\entry}$, for some loop element $L$.
\end{lemma}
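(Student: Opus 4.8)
The plan is to characterize backward edges via the dominator relation, then show this coincides exactly with edges whose target is some loop entry point $L^{\entry}$ with source in $\belongs(L)$. Recall that an edge $(u,v)$ is backward precisely when $v$ dominates $u$. So I must prove the biconditional: (i) every edge $(u,v)$ with $v$ dominating $u$ is of the stated form, and (ii) every edge from $\belongs(L)$ to $L^{\entry}$ is backward.

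First I would handle the easier direction (ii). Suppose $(u,v)$ is an edge with $u \in \belongs(L)$ and $v = L^{\entry}$. By definition of $\belongs(L)$, we have $u \in \inside(L)$, meaning $L^{\entry}$ dominates $u$; that is, $v$ dominates $u$, so $(u,v)$ is backward by definition. This direction is essentially immediate from the definitions laid out for loop elements.

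The harder direction is (i): given an edge $(u,v)$ where $v$ dominates $u$, I must produce a loop element $L$ with $u \in \belongs(L)$ and $v = L^{\entry}$. The key structural fact I would invoke is that back edges in a control-flow graph arise only from the loop constructs used to build the graph (Figure~\ref{fig:loopElement}), since the graph is assembled by recursive expansion of structured control structures. Concretely, I would argue that because $v$ dominates $u$ and $(u,v)$ is an edge, the only way a structured program can route control ``backward'' to an already-dominating vertex is through the \texttt{continue}/loop-repetition mechanism, whose target is exactly the entry point of the nearest surrounding loop. Thus $v$ must be $L^{\entry}$ for the loop element $L$ whose body contains $u$, and I would then verify that $u \in \belongs(L)$ by checking that $L$ is the \emph{nearest} loop element whose entry dominates $u$ (using the nesting structure and the fact that $v=L^{\entry}$ dominates $u$). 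Here I would lean on the enumeration of potential successors of a statement---\emph{out}, \emph{exit}, \emph{entry}, \emph{stop}---and observe that among these only \emph{entry} (the \texttt{continue} target) can be dominated by its source.

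The main obstacle I expect is making the structural claim in direction (i) rigorous, since it rests on how the graph is generated rather than on a purely combinatorial property of dominators. In particular, I would need to argue carefully that an edge target $v$ which dominates its source $u$ cannot be any of \emph{out}, \emph{exit}, or \emph{stop}: for \emph{out} and \emph{stop}, the target lies ``after'' $u$ in program order and hence cannot dominate $u$ (there is a \texttt{start}-to-$v$ path avoiding it would be needed, but more to the point such targets do not dominate earlier statements); for \emph{exit}, the target lies outside the loop and so fails to dominate vertices strictly inside. Ruling these out leaves \emph{entry} as the only possibility, and pinning down that the relevant loop is precisely the one $u$ belongs to completes the argument. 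Citing~\cite{allen1970control} suggests the cleanest route is to appeal to the established theory of back edges in reducible/structured flow graphs rather than re-derive it from scratch.
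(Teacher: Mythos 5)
The paper does not prove this lemma at all: it is imported verbatim from Allen's classical work on control-flow analysis via the citation \cite{allen1970control}, so there is no in-paper argument to compare yours against. Judged on its own merits, your outline is sound. Direction (ii) is complete and correct: $u\in\belongs(L)\subseteq\inside(L)$ means by definition that $L^{\entry}$ dominates $u$, so the edge is backward. For direction (i), your case analysis over the four potential successors is the right skeleton, and two of the three exclusions are in fact immediate from the paper's definitions rather than needing program order: \texttt{stop} is a sink, so it lies on no \texttt{start}-to-$u$ path and cannot dominate anything; and $\inside(L)$ is \emph{defined} as the vertices dominated by $L^{\entry}$ and \emph{not} dominated by $L^{\exit}$, so a \texttt{break} target never dominates its source. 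The only substantive content is ruling out \emph{out} (and pinning the target to the entry of the \emph{nearest} enclosing loop, which gives $u\in\belongs(L)$ rather than merely $u\in\inside(L)$); as you say, that requires an induction over the recursive expansion of Figure~\ref{fig:controlStructures}, i.e., the standard theory of back edges in reducible flow graphs. You correctly flag this as the step you would either carry out structurally or delegate to \cite{allen1970control} --- which is exactly what the authors do. No gap beyond the acknowledged one, which the citation covers.
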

\begin{corollary}
\label{cor:cycle}
Let $C$ be a directed cycle for which all vertices are in $\inside(L)$ and
at least one vertex is in $\belongs(L)$, for
some loop element $L$. Then $L^{\entry}\in C$.
\end{corollary}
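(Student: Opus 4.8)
The plan is to locate a single backward edge of $C$ and show that its head must be exactly $L^{\entry}$. Two ingredients drive the argument: Lemma~\ref{lem:backward}, which tells us that the head of any backward edge is the entry $L'^{\entry}$ of some loop element $L'$; and the fact that the control-flow graph of a structured program is reducible (see~\cite{Aho1986,allen1970control}), so that the forward edges alone form a DAG and every directed cycle must use at least one backward edge. Applying this to $C$ yields a backward edge $(u,h)\in C$ with, by Lemma~\ref{lem:backward}, $h=L'^{\entry}$ and $u\in\belongs(L')$ for some loop element $L'$. The whole proof then reduces to showing $L'=L$.

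First I would show that $h$ dominates \emph{every} vertex of $C$, i.e.\ that $h$ is the unique ``header'' of the cycle. Let $N$ be the set of vertices of $C$ not dominated by $h$. If some edge $(p,q)$ of $C$ runs from $p\in N$ to a vertex $q$ dominated by $h$, then $q$ must equal $h$: otherwise a \texttt{start}-to-$p$ path avoiding $h$ (which exists since $p\in N$), followed by the edge $(p,q)$, would reach $q\neq h$ without passing through $h$, contradicting that $h$ dominates $q$. Since $C$ is a simple cycle, $h$ has a unique predecessor $u$ on $C$, and $u\notin N$ because the backward edge $(u,h)$ means $h$ dominates $u$. Hence no edge of $C$ leaves $N$ for the dominated part; as the dominated part is nonempty (it contains $h$ and $u$) and $C$ is strongly connected, $N$ must be empty.

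It then remains to identify $L'$ with $L$. Since $h\in C\subseteq\inside(L)$, the vertex $L^{\entry}$ dominates $h=L'^{\entry}$. On the other hand, $h$ dominates all of $C$, in particular the vertex $x\in\belongs(L)\cap C$ guaranteed by hypothesis, so $L'^{\entry}$ dominates $x$. Because the dominators of $x$ form a chain and $x\in\belongs(L)$ makes $L^{\entry}$ the \emph{nearest} (deepest) loop entry dominating $x$, every other loop entry dominating $x$ -- in particular $L'^{\entry}$ -- must in turn dominate $L^{\entry}$. The two relations ($L^{\entry}$ dominates $L'^{\entry}$, and $L'^{\entry}$ dominates $L^{\entry}$) force $L^{\entry}=L'^{\entry}$, hence $L=L'$ and $h=L^{\entry}\in C$.

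I expect the main obstacle to be the middle step. The tempting shortcut -- pick any backward edge and read off its head -- fails because $C$ may also contain backward edges pointing to the entries of loops nested strictly inside $L$, whose heads are \emph{not} $L^{\entry}$. The real content is therefore to show that $C$ possesses a single entry vertex dominating the whole cycle; once reducibility is invoked to guarantee a backward edge, the domination bookkeeping above pins that entry down, and the $\belongs$-characterisation of $L^{\entry}$ as the deepest loop entry above $x$ finishes the job. A secondary subtlety is that a vertex of $\belongs(L)$ may still be dominated by the entry of a nested loop (when it lies after that loop in the program), so arguments based only on the sets $\inside(L')$ are not by themselves conclusive; working with the domination chain of $x$ sidesteps this difficulty.
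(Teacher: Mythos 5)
The paper itself offers no proof of this corollary---it is presented as an immediate consequence of Lemma~\ref{lem:backward} together with the reducibility of structured control-flow graphs---so the comparison here is really an assessment of your argument on its own terms. Your first two steps are sound and standard: reducibility yields a backward edge $(u,h)$ on $C$, and your $N$-argument correctly establishes that the head $h$ of a backward edge lying on a simple cycle dominates every vertex of that cycle (the classical ``natural loop header'' fact).

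The gap is in the final identification $L'=L$. You justify it by asserting that $x\in\belongs(L)$ makes $L^{\entry}$ the \emph{deepest} loop entry dominating $x$, so that any other loop entry dominating $x$ must dominate $L^{\entry}$. That assertion is false, and it is refuted by exactly the ``secondary subtlety'' you name in your closing paragraph: in \texttt{while(A)\{ while(B)\{\dots\}; C; \}} the statement $C$ belongs to the outer loop $L_A$, yet it is dominated by $B^{\entry}=L_B^{\entry}$, and $L_B^{\entry}$ does \emph{not} dominate $L_A^{\entry}$. (The paper's informal gloss ``$L$ is the nearest loop element such that $L^{\entry}$ dominates $v$'' is made precise via the $\inside$ sets precisely because domination alone does not capture it.) So the domination chain of $x$ does not by itself force $L'^{\entry}$ to dominate $L^{\entry}$, and your Part~3 uses only ``$h$ dominates $x$,'' which is insufficient. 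To close the gap you must exploit the cycle structure: if $L'$ were strictly nested under $L$, then $x\in\belongs(L)$ being dominated by $L'^{\entry}$ forces $L'^{\exit}$ to dominate $x$, hence $x\notin\inside(L')$, while the tail $u$ of the backward edge lies in $\belongs(L')\subseteq\inside(L')$; the simple cycle must therefore re-enter $\inside(L')$ on its way from $x$ back to $u$ without visiting $L'^{\entry}$ a second time. That is impossible because in a structured program every edge from $\outside(L')$ into $\inside(L')$ ends at $L'^{\entry}$---a fact that itself needs a short justification from the successor rules (\emph{out}, \emph{exit}, \emph{entry}, \emph{stop}) and is the real content hiding behind the word ``Corollary.''
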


\subsection{Treewidth and DAG-width}
The \emph{treewidth}, introduced in~\cite{SeymourRobertson1986}, is a graph theoretic concept which measures
tree-likeness of an undirected graph. We will not review the formal definition of treewidth here since we 
do not need it. Thorup~\cite{thorup1998all} showed that every control-flow graph has a treewidth of at most $6$.   This implies that any control-flow graph
has $O(n)$ edges.
%

The \emph{DAG-width}, introduced independently in~\cite{BDHK06, Obd06}, is a measure of how close a directed graph is to
being a directed acyclic graph (DAG). Like the treewidth, it is defined
via the best possible width of a so-called {\em DAG-decomposition}.
We will review the formal definition in Section~\ref{sec:decompose}.

\subsection{Cops and Robbers game}
The \emph{cops and robber} game on a graph $G$ is a two-player game in which $k$ cops attempt to catch a robber.
Most graph width measures have an equivalent characterization via a variant
of the cops and robber game.
For example, an undirected graph $G$
has \emph{treewidth} $k$ if and only if $k+1$ cops can search $G$ and successfully catch the robber~\cite{seymour1993graph}.

The DAG-width relates to the following variant of the cops and robber game
played on a directed graph $G=(V, E)$:
\begin{itemize}
	\item The cop player controls $k$ cops, which can occupy any of the $k$ vertices in the graph. We denote this set as
		$X$ where $X \in [V]^{\leq k}$. The robber player controls the robber which can occupy any vertex $r$.

	\item A play in the game is a (finite or infinite) sequence $(X_0, r_0), (X_1, r_1), \dots, (X_i, r_i)$ of positions taken
		by the cops and robbers.  $X_0=\emptyset$, i.e., the robber
		starts the game by choosing an initial position.

	\item In a transition in the play from $(X_i, r_i)$ to $(X_{i+1}, r_{i+1})$, the cop player 
		moves the cops not in $(X_i \cap X_{i+1})$ to $(X_{i+1} \setminus X_i)$ with a helicopter.
		The robber can, however, see the helicopter landing and move at a great speed 
along a cop-free path to
		another vertex $r_{i+1}$.
More specifically, there must be a directed path 
		from $r_i$ to $r_{i+1}$ in the digraph $G \setminus (X_i \cap X_{i+1})$.

	\item The play is winning for the cop player, if it terminates with $(X_m, r_m)$ such that $r_m \in X_m$. If the play is infinite,
		the robber player wins.

	\item A ($k$-cop) strategy is a function $f : [V]^{\leq k} \times V \rightarrow [V]^{\leq k}$. Put differently, the cops can see the robber when
deciding where to move to.
A play is consistent
		with strategy $f$ if $X_{i+1} = f(X_i, r)$ for all $i$.
\end{itemize}

\begin{definition}(Monotone strategies)
	\label{def:monotone}
	A strategy for the cop player is called \emph{cop-monotone}, if in a play consistent with
	that strategy, the cops never visit a vertex again. More precisely, if $v \in X_i$ and $v \in X_k$
	then $v \in X_j$, for all $i \le j \le k$.
\end{definition}

The following result is central to our proof:

\begin{theorem}\textnormal{~\cite[Lemma~15 and Theorem~16]{berwanger2012dag}}
	\label{lemma:dagwidthEqv}
	A digraph G has DAG-width $k$ if and only if the cop player has a cop-monotone winning 
	strategy in the $k$-cops and robber game on G.
\end{theorem}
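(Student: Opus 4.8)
The statement is an equivalence, so the plan is to prove the two implications separately by translating between a DAG-decomposition and a cop-monotone strategy. Recall that a DAG-decomposition of $G$ is a pair $(D, (W_d)_{d \in V(D)})$ in which $D$ is a DAG and each bag $W_d \subseteq V$ satisfies three conditions: the bags cover $V$; along every edge $(d,d')$ of $D$ the set $W_d \cap W_{d'}$ \emph{guards} the vertices lying in bags strictly below $d'$, meaning no directed path leaves that region without passing through the guard; and each vertex of $G$ occupies a connected portion of $D$. The width is $\max_d |W_d|$ and the DAG-width is the minimum width over all decompositions. I would therefore show that DAG-width at most $k$ is equivalent to the existence of a cop-monotone winning strategy with $k$ cops, which yields the stated equality.

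For the direction decomposition $\Rightarrow$ strategy, I would have the $k$ cops walk through $D$ from a source toward the sinks, always occupying the current bag $W_d$. The invariant to maintain is that the robber is confined to the union of bags strictly below the cops' current node. Whenever the robber sits below a node $d$, the cops select the child $d'$ whose sub-DAG still contains the robber and move to $W_{d'}$; the guarding condition guarantees that during the helicopter move, while only $W_d \cap W_{d'}$ remains on the board, the robber cannot escape the sub-DAG rooted at $d'$. Since $D$ is finite and acyclic, the robber is driven toward a sink and eventually caught. Cop-monotonicity follows from the connectivity condition: a vertex that leaves a bag as we descend never reappears below, so no cop ever returns to a vacated vertex.

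For the converse, strategy $\Rightarrow$ decomposition, I would synthesize the decomposition directly from the strategy. Taking the finite set of cop positions reachable in plays consistent with the monotone winning strategy as the nodes of $D$, setting $W_d := X_d$ for the cop set at that position, the transitions of the game induce the edges of $D$. The guarding condition can then be read off from the robber's movement rule: since the robber may only travel along paths avoiding $X_i \cap X_{i+1}$, the intersection of two consecutive bags guards exactly the region still available to the robber, which is the part of $G$ appearing below the child node. Monotonicity is precisely what makes this construction acyclic and forces each vertex into a connected sub-DAG, so the resulting object is a genuine DAG-decomposition; its width is at most $k$ because every bag is a cop position of size at most $k$.

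The main obstacle is this converse direction. Guaranteeing that the object built from the strategy is finite and genuinely acyclic relies essentially on monotonicity: a non-monotone strategy could revisit vertices and create cyclic guard-below dependencies that cannot be oriented as a DAG. Verifying the guarding and connectivity axioms in full also requires care in how consistent plays are quotiented into the nodes of $D$ so as to keep the structure finite without destroying the guarding invariant. By contrast, the forward direction is largely bookkeeping once the confinement invariant and its maintenance under helicopter moves are set up.
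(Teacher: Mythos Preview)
The paper does not prove this theorem at all: it is quoted verbatim as a known result from Berwanger et al.\ (cited as \cite[Lemma~15 and Theorem~16]{berwanger2012dag}) and used as a black box, so there is no ``paper's own proof'' to compare your proposal against.

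For what it is worth, your sketch follows the standard Berwanger et al.\ argument fairly closely: cops descend through the DAG occupying bags for one direction, and the strategy graph is turned into a decomposition for the other. One point to be careful about if you ever write this out in full: in the strategy $\Rightarrow$ decomposition direction, the nodes of $D$ are not merely the reachable cop positions $X$ but rather the reachable \emph{game positions} (pairs of cop-set and robber-reachable region), suitably quotiented; taking only the cop sets as you describe can collapse distinct branches and break the connectivity axiom. You also conflate ``DAG-width $k$'' with ``DAG-width at most $k$''; the equivalence is naturally stated and proved for the latter, and the exact-$k$ version then follows by taking the minimum on both sides.
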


Therefore, in order to prove that DAG-width of a graph $G$ is at most $k$, it suffices to find a
cop-monotone winning strategy for the cop player in the $k$-cops and robber game on $G$. In the next section,
we present such a strategy and argue its correctness. We will later (in Section~\ref{sec:decompose}) give a 
second proof, not using the $k$-cops and robber game, of the DAG-width of control-flow graphs. As such, 
Section~\ref{sec:proof} is not required for our main result, but is a useful tool for gaining insight
into the structure of control-flow graphs,
and also provides a way of proving a lower bound on the DAG-width.

\section{Cops and Robbers on Control Flow Graphs}
\label{sec:proof}
Let $G=(V, E)$ be the control flow graph of a structured program $P$.
Recall that we characterize a loop element $L$ by its \emph{entry} and \emph{exit} points and refer to it by
the pair $(L^{\entry}, L^{\exit})$. 
%
We now present the following strategy $f$ for the cop player in the cops and robber game on $G$ with \emph{three} 
cops. 
\begin{enumerate}
	\item We will throughout the game maintain that at this point $X(1)$ occupies $L^{\entry}$, $X(2)$ occupies $L^{\exit}$, and $r \in \inside(L)$, for
some loop element $L$.

(In the first round $L \coloneqq L_\phi$, where $L_\phi$ is the hypothetical loop element that encloses $G$.  Regardless of the initial position of the robber, $r\in \inside(L_\phi)$. The cops $X(1)$ and
$X(2)$ are not used in the initial step.)

	\item Now we move the cops: 
		\begin{enumerate}
			\item If $r \in \belongs(L)$, move $X(3)$ to $r$. 
			\item Else, since $r\in \inside(L)$, we must have 
				$r \in \inside(L_i)$ for some loop $L_i$
				directly nested under $L$.
				Move $X(3)$ to $L_i^{\exit}$.
		\end{enumerate}
	
	\item Now the robber moves, say to $r'$. Note that $r' \in (\inside(L)\cup\{\texttt{stop}\})$ since $r\in \inside(L)$ and
		$X(1)$ and $X(2)$ block all paths from there to $(\outside(L) \setminus \{\texttt{stop}\})$.

	\item One of four cases is possible:
		\begin{enumerate}
			\item $r' = X(3)$. Then we have now caught the robber and we are done.
			\item $r' = \texttt{stop}$. Move $X(3)$ to \texttt{stop} and we will catch the robber in next move since the robber cannot 
				leave \texttt{stop}.
			\item $r' \in \inside(L_i)$, i.e., the robber stayed inside the same loop that it was before.  Go to step 5.
			\item $r' \in (\inside(L) \setminus \inside(L_i))$, i.e., the robber left the inside of the loop that it was in. Go back to step $2$.
		\end{enumerate}

	\item We reach this case only if the robber $r'$ is inside $L_i$,
	and $X(3)$ had moved to $L_i^{\exit}$ in the step before.  Thus
	cop $X(3)$ now blocks movements of $r'$ to 
	$(\outside(L_i) \setminus \{\texttt{stop}\})$.  We must do one
	more round before being able to recurse:
	\begin{enumerate}
		\item Move $X(1)$ to $L_i^{\entry}$.
		\item The robber moves, say to $r''$.
			By the above, $r'' \in (\inside(L_i)\cup \texttt{stop})$.
		\item If $r'' = L_i^{\entry}$, we have caught
			the robber.
		If $r'' = \texttt{stop}$, we can catch the
						robber in the next move.
		\item In all remaining cases, $r'' \in \inside(L_i)$. Go back to step $1$ with $L \coloneqq L_i$, $X(2)$ as $X(3)$ and
						$X(3)$ as $X(2)$.
				\end{enumerate}
\end{enumerate}

For a step-by-step annotated example, see Appendix~\ref{appendix:example}.
It should be intuitive that we make progress if we reach Step (5), since we
have moved to a loop that is nested more deeply.  It is much less obvious
why we make progress if we reach 4(a).  To prove this, we introduce the
notion of a distance function $\dist(v,L^{\exit})$, which measures roughly 
the length of the longest path from $v$ to $L^{\exit}$, except that we do
not count vertices that are inside loops nested under $L$.  Formally:
\begin{definition}
Let $L$ be a loop element of $G$ and $v \in \inside(L)$. Define 
$\dist(v, L^{\exit}) = \max_{P}(|P\cap\belongs(L)|),$
where $P$ is a directed simple path from $v$ to $L^{\exit}$ that uses only vertices
in $\inside(L)$ and does not use $L^{\entry}$.
\end{definition}

\begin{lemma}
\label{lem:move}
When the robber moves from $r$ to $r'$ in step (3), then
$\dist(r',L^{\exit})\leq \dist(r,L^{\exit})$.
The inequality is strict 
if $r\in \belongs(L)$ and $r'\neq r$.
\end{lemma}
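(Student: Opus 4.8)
The plan is to bound $\dist(r,L^{\exit})$ from below by exhibiting, for the robber's move, a single admissible path from $r$ to $L^{\exit}$ carrying at least as many $\belongs(L)$ vertices as the optimal path realizing $\dist(r',L^{\exit})$. I treat only the case $r'\in\inside(L)$, since $\dist(\cdot,L^{\exit})$ is undefined at \texttt{stop} and that case is disposed of directly by the strategy. By the setup of Step (3), the robber travels from $r$ to $r'$ along a directed path $Q$ that stays in $\inside(L)$ and avoids both $L^{\entry}$ (blocked by $X(1)$) and $L^{\exit}$ (blocked by $X(2)$); since any walk can be shortened, $Q$ may be taken simple. Let $P'$ be a simple path realizing $\dist(r',L^{\exit})$, i.e.\ a path from $r'$ to $L^{\exit}$ staying in $\inside(L)$ until its endpoint, avoiding $L^{\entry}$, with $|P'\cap\belongs(L)|=\dist(r',L^{\exit})$.

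First I would splice $Q$ and $P'$ into an admissible path from $r$ to $L^{\exit}$. Let $w$ be the \emph{last} vertex of $P'$ (ordered from $r'$ towards $L^{\exit}$) that also lies on $Q$; such $w$ exists because $r'$ itself qualifies, and $w\neq L^{\exit}$ since $L^{\exit}\notin\inside(L)$ whereas $Q\subseteq\inside(L)$. Define $P:=Q[r,w]\cdot P'[w,L^{\exit}]$. This $P$ is simple: its two pieces meet at $w$, and by maximality of $w$ no vertex of $P'$ strictly after $w$ lies on $Q$. Moreover $P$ avoids $L^{\entry}$ and stays in $\inside(L)$ until its endpoint $L^{\exit}$, so it is admissible in the definition of $\dist(r,L^{\exit})$, giving $\dist(r,L^{\exit})\geq|P\cap\belongs(L)|$.

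The crux is to show the splice loses no $\belongs(L)$ vertex, i.e.\ $P'\cap\belongs(L)\subseteq P$. The only $\belongs(L)$ vertices of $P'$ that could be dropped are those on $P'[r',w)$. For any such $x$ (so $x\neq w$), the concatenation $x\xrightarrow{P'}w\xrightarrow{Q}r'\xrightarrow{P'}x$ is a closed directed walk of positive length lying entirely in $\inside(L)\setminus\{L^{\entry}\}$; a shortest positive-length closed walk through $x$ is a simple cycle through $x$, so $x$ lies on a directed cycle $C\subseteq\inside(L)$ with $L^{\entry}\notin C$. By Corollary~\ref{cor:cycle} (contrapositive) $C$ contains no $\belongs(L)$ vertex, hence $x\notin\belongs(L)$. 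Thus $P'[r',w)$ contributes nothing and $|P\cap\belongs(L)|\geq|P'\cap\belongs(L)|=\dist(r',L^{\exit})$, proving the inequality. I expect this cycle-elimination step, arguing that shortcutting discards only non-$\belongs(L)$ vertices, to be the main obstacle, and Corollary~\ref{cor:cycle} is precisely the tool that makes it go through.

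For the strict inequality, assume $r\in\belongs(L)$ and $r'\neq r$. Then $r$ is the first vertex of $P$, so $r\in P\cap\belongs(L)$, and it suffices to show $r\notin P'$. If $r$ lay on $P'$, then $r\xrightarrow{Q}r'\xrightarrow{P'}r$ would be a positive-length closed walk in $\inside(L)\setminus\{L^{\entry}\}$ (positive since $r\neq r'$), forcing as above a directed cycle through $r$ avoiding $L^{\entry}$, which by Corollary~\ref{cor:cycle} could contain no $\belongs(L)$ vertex, contradicting $r\in\belongs(L)$. Hence $r\notin P'$, so $r$ is an extra $\belongs(L)$ vertex on $P$ and $|P\cap\belongs(L)|\geq|P'\cap\belongs(L)|+1$, whence $\dist(r,L^{\exit})>\dist(r',L^{\exit})$.
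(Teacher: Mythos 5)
Your proof is correct and takes essentially the same approach as the paper: both concatenate the robber's path with the witness path for $\dist(r',L^{\exit})$ and invoke Corollary~\ref{cor:cycle} to argue that the overlap (the directed cycles created) carries no $\belongs(L)$ vertices, so no distance is lost in making the concatenation simple. Your write-up is in fact somewhat more careful than the paper's, notably in justifying why $r\notin P'$ in the strict-inequality case, which the paper merely asserts.
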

\begin{proof}
Let $P$ be the directed path from $r$ to $r'$ along which the robber moves.
Notice that $L^{\entry}\not\in P$ since $X(1)$ is on $L^{\entry}$.
Let $P'$ be the path that achieves the maximum in $\dist(r',L^{\exit})$;
by definition $P'$ does not contain $L^{\entry}$.

$P\cup P'$ may contain directed cycles, but if $C$ is such a cycle
then no vertices of $C$ are in $\belongs(L)$ by 
Corollary~\ref{cor:cycle}.
So if we let $P_s$ be what remains of $P\cup P'$ after removing all
directed cycles then $P_s\cap \belongs(L) = (P\cup P') \cap \belongs(L)$.
Since $P_s$ is a simple directed path from $r$ to $L^{\exit}$ that
does not use $L^{\entry}$, therefore
$\dist(r,L^{\exit}) \geq |P_s\cap \belongs(L)|\geq 
|P' \cap \belongs(L)| = \dist(r',L^{\exit})$ as desired.
If $r'\neq r$, then $P'$ cannot possibly
include $r$ while $P_s$ does, and so  if additionally
$r\in \belongs(L)$ then the inequality is strict.
\end{proof}

\begin{lemma}
	\label{lemma:winningStrategy}
	The strategy $f$ is winning.
\end{lemma}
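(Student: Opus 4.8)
The plan is to prove that every play consistent with $f$ is finite; by the winning condition this forces the robber to be caught. Progress will come from two nested sources: the recursive descent into more deeply nested loops in Step~5, and the potential $\dist(\cdot,L^{\exit})$ within a fixed loop. Before treating termination I would verify the maintained invariant. Control reaches Step~1 only initially, with $L=L_\phi$ (where the robber is trivially in $\inside(L_\phi)=V$), or through Step~5(d); in the latter case the relabelling leaves $X(1)$ on $L_i^{\entry}$ and $X(2)$ on $L_i^{\exit}$ with $r''\in\inside(L_i)$, which is exactly the invariant for $L_i$. Between consecutive visits to Step~1 only $X(3)$ moves, so $X(1)$ and $X(2)$ stay put on $L^{\entry}$ and $L^{\exit}$ throughout. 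I would also record the confinement used in Step~3: since $L^{\exit}$ post-dominates $\inside(L)$ once edges to \texttt{stop} are ignored, these two cops prevent the robber from ever leaving $\inside(L)\cup\{\texttt{stop}\}$.

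For finiteness I would induct on the number of loop elements nested under $L$, proving that any play entering Step~1 with loop $L$ terminates. Fix $L$ and run the play until Step~5 is first executed (possibly never). Until then every round either captures the robber (cases 4(a),4(b)) or is of type 2(a) or 2(b)$\to$4(d) and returns to Step~2 with $X(1)$ still on $L^{\entry}$; for the latter rounds Lemma~\ref{lem:move} makes $\dist(r,L^{\exit})$ non-increasing, and I will sharpen this to a strict decrease. As $\dist(\cdot,L^{\exit})$ is a bounded non-negative integer, only finitely many such rounds can occur, so within finitely many moves the play either ends or reaches Step~5. Step~5 recurses on some $L_i$ directly nested under $L$, which has strictly fewer nested loops, and the inductive hypothesis disposes of the remainder; the base case (no nested loops, so Step~5 is unreachable and every round is of type 2(a)) is subsumed. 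The only wrinkle is the outermost instance $L=L_\phi$, where $X(1),X(2)$ are never placed: there Corollary~\ref{cor:cycle} shows that no directed cycle meets $\belongs(L_\phi)$, so the argument below still applies with the empty set in place of the nonexistent $\{L_\phi^{\entry}\}$ and \texttt{stop} playing the role of the exit. Instantiating the claim at $L_\phi$ proves the lemma.

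The crux, and the step I expect to be the main obstacle, is the strict decrease of $\dist(\cdot,L^{\exit})$ in case 2(b)$\to$4(d), where the robber starts at $r\in\inside(L_i)$ rather than in $\belongs(L)$, so Lemma~\ref{lem:move} yields only $\dist(r',L^{\exit})\le\dist(r,L^{\exit})$. I would recover strictness by reusing that lemma's technique. Let $P$ be the robber's path from $r$ to $r'$. Since $r'\notin\inside(L_i)$, $r'\neq\texttt{stop}$, and $L_i^{\exit}$ post-dominates $\inside(L_i)$ off the \texttt{stop}-edges, $P$ must pass through $L_i^{\exit}$, and $L_i^{\exit}\in\belongs(L)$ because $L_i$ is nested directly under $L$. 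Let $P'$ realize $\dist(r',L^{\exit})$; by definition $P'$ avoids $L^{\entry}$, and so does $P$ (the cop $X(1)$ sits there). Were $L_i^{\exit}$ on $P'$, the sub-path of $P$ from $L_i^{\exit}$ to $r'$ together with the sub-path of $P'$ from $r'$ back to $L_i^{\exit}$ would form a directed cycle through the $\belongs(L)$-vertex $L_i^{\exit}$, which by Corollary~\ref{cor:cycle} would have to contain $L^{\entry}$ --- impossible. Hence $L_i^{\exit}\notin P'$. Deleting cycles from $P\cup P'$ to obtain a simple path $P_s$ removes no vertex of $\belongs(L)$ (each deleted cycle avoids $\belongs(L)$, again by Corollary~\ref{cor:cycle}), so $L_i^{\exit}\in P_s\setminus P'$ and $|P_s\cap\belongs(L)|\ge|P'\cap\belongs(L)|+1$, giving $\dist(r',L^{\exit})<\dist(r,L^{\exit})$. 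Case 2(a) is Lemma~\ref{lem:move} itself, with $r\in\belongs(L)$ and $r'\neq r$ (and if the robber cannot move it is captured at once, since $X(3)$ then sits on $r$). The essential point is thus the transfer of the ``strict'' conclusion of Lemma~\ref{lem:move} from a robber resting in $\belongs(L)$ to one merely crossing $\belongs(L)$ while escaping a nested loop.
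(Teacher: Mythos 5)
Your proposal is correct and follows essentially the same route as the paper: induction on the number of loops nested in $L$, with termination within a fixed loop driven by the strict decrease of $\dist(\cdot,L^{\exit})$. The only difference is cosmetic: for the crux case where the robber escapes a nested $L_i$, the paper observes $\dist(r,L^{\exit})=\dist(L_i^{\exit},L^{\exit})$ and applies Lemma~\ref{lem:move} starting from $L_i^{\exit}\in\belongs(L)$, whereas you re-run the cycle-removal argument directly to show $L_i^{\exit}$ contributes an extra $\belongs(L)$-vertex; these are the same idea, and your explicit treatment of the $L_\phi$ boundary case is a welcome extra detail.
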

\begin{proof}
Clearly the claim holds if the robber ever moves to \texttt{stop}, so assume
this is not the case.
Recall that at all times the  strategy maintains a loop $L$ such
that two of the cops are at $L^{\entry}$ and $L^{\exit}$.  
We do an induction on the number of loops that are nested in $L$.

So assume first that no loops are nested inside $L$.  Then $\inside(L)
=\belongs(L)$, and by Lemma~\ref{lem:move} the distance of the robber
to $L^{\exit}$ steadily decreases since $X(3)$ always moves onto the robber,
forcing it to relocate. Eventually the robber must get caught.

For the induction step, assume that there are loops nested inside
$L$.  If we ever reach step (5) in the strategy, then the enclosing loop
$L$ is changed to $L_i$, which is inside $L$ and hence has fewer loops inside
and we are done by induction.  But we must reach step (5) eventually (or
catch the robber directly), because with every execution of (3) the robber
gets closer to $L^{\exit}$:
\begin{itemize}
\item If $r\in \belongs(L)$, then this follows directly from Lemma~\ref{lem:move}
	since $X(3)$ moves onto $r$ and forces it to move.
\item If $r\in \inside(L_i)$, and we did not reach step (5), then $r$
	must have left $L_i$ using $L_i^{\exit}$.  Notice that
	$dist(r_i,L^{\exit})=dist(L_i^{\exit},L^{\exit})$ due to our
	choice of distance-function.  Also notice that
	$L_i^{\exit} \in \belongs(L)$ since $L_i$ was directly nested
	under $L$.  We can hence view the robber as having moved to
	$L_i^{\exit}$ (which keeps the distance the same) and then to
	the new position (which strictly decreases the distance by 
	Lemma~\ref{lem:move} to $L_i^{\exit}$).
\qed
\end{itemize}
\end{proof}

\begin{lemma}
	\label{lemma:monotoneStrategy}
	The strategy $f$ is cop-monotone.
\end{lemma}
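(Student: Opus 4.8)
The plan is to restate cop-monotonicity (Definition~\ref{def:monotone}) as: every vertex is occupied during a single contiguous interval of time, i.e.\ once all cops leave a vertex $v$, no cop ever returns to $v$. I would partition the play into \emph{phases}, one per loop element in the nested chain $L_\phi = M_0 \supsetneq M_1 \supsetneq M_2 \supsetneq \cdots$ that the strategy descends, where $M_{t+1}$ is the loop $L_i$ into which step~(5) recurses. Because at level $M_t$ the two cops on $M_t^{\entry}$ and $M_t^{\exit}$ block every directed path out of $\inside(M_t)$ (except to \texttt{stop}, which ends the game), the robber can never re-enter a shallower loop; hence the phases occur in this strictly descending order in time, and it suffices to verify, vertex by vertex, that the set of times at which it is occupied is an interval.

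Within the phase for a loop $L$, the two cops on $L^{\entry}$ and $L^{\exit}$ stay put, while the free cop is placed (step~(2)) either on the robber, when $r\in\belongs(L)$, or on $L_i^{\exit}$ for a directly nested $L_i$, when $r\in\inside(L_i)$. In both cases the landing vertex lies in $\belongs(L)$ and, by the choice of the distance function, satisfies $\dist(\cdot,L^{\exit})=\dist(r,L^{\exit})$. The analysis in the proof of Lemma~\ref{lemma:winningStrategy} (via Lemma~\ref{lem:move}) shows that this distance strictly decreases with every execution of step~(3) while we remain in the phase. Hence the free cop occupies a sequence of vertices of strictly decreasing distance to $L^{\exit}$ and can never return to an earlier one; moreover it never lands on $L^{\entry}$ or $L^{\exit}$, so it does not disturb the two fixed cops. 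This gives monotonicity \emph{within} a phase.

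For the transition $L \to L_i$ (step~(5)) I would note that the cop already sitting on $L_i^{\exit}$ stays there and becomes a boundary cop of the new phase, the cop on $L^{\entry}$ moves to $L_i^{\entry}$, and the cop on $L^{\exit}$ is freed; the relabelling in step~(5d) does not change which vertices are occupied. Thus $L_i^{\entry}$ and $L_i^{\exit}$ remain occupied continuously across the transition, while $L^{\entry}$ and $L^{\exit}$ are vacated. The crux is then that every vertex occupied during any phase at or below $L_i$ lies in $\inside(L_i)\cup\{L_i^{\exit}\}$ (boundary entries and free-cop positions lie in $\inside(L_i)$, and every boundary exit other than $L_i^{\exit}$ lies in $\inside(L_i)$ as well). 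Using $\inside(L_i)\subseteq\inside(L)$, $L^{\exit}\in\outside(L)$, $L^{\entry}\notin\inside(L_i)$, and $\belongs(L)\cap\inside(L_i)=\emptyset$, one sees that $L^{\exit}$ and every vertex of $\belongs(L)$ visited by the free cop other than $L_i^{\exit}$ lie outside $\inside(L_i)\cup\{L_i^{\exit}\}$, hence are never reoccupied.

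The main obstacle is exactly this cross-phase bookkeeping, and its most delicate point is ruling out $L_i^{\exit}=L^{\entry}$, which would make the vacated $L^{\entry}$ appear to be reoccupied at level $L_i$. I would settle it with Lemma~\ref{lem:backward}: an edge from $\inside(L_i)$ into $L^{\entry}$ is a backward edge, so its tail must lie in $\belongs(L)$, contradicting $\inside(L_i)\cap\belongs(L)=\emptyset$; hence no exit edge of $L_i$ reaches $L^{\entry}$ and $L_i^{\exit}\neq L^{\entry}$. The remaining distinctness facts are easier ($L^{\exit}\neq L_i^{\exit}$ since $L_i^{\exit}\in\inside(L)$ but $L^{\exit}\in\outside(L)$, and entries of distinct loops differ), and the top phase $M_0=L_\phi$ is harmless since its hypothetical $L_\phi^{\entry},L_\phi^{\exit}$ are never real occupied vertices. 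Chaining these observations, each vertex's set of occupied times is a single interval, so the strategy is cop-monotone.
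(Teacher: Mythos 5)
Your proof is correct and follows essentially the same route as the paper: within a phase the free cop's successive positions have strictly decreasing $\dist(\cdot,L^{\exit})$ by Lemma~\ref{lem:move}, and across a phase transition all future cop positions lie in $\inside(L_i)\cup\{L_i^{\exit}\}$, which meets the previously visited territory $\belongs(L)\cup\outside(L)$ only in $L_i^{\exit}$, a vertex that stays continuously occupied. The side-argument ruling out $L_i^{\exit}=L^{\entry}$ is not actually needed --- if they coincided, that vertex would simply remain continuously occupied through the transition, which is all that cop-monotonicity requires --- but it does no harm.
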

\begin{proof}
	We must show that the cops do not re-visit a previously visited vertex at any step of the
	strategy $f$. We note that since \texttt{stop} is a sink in $G$ and the cops move to \texttt{stop}
	only if the robber was already there, it will never be visited again. Now the only steps which we need
	to verify are (2) and (5a).

Observe that while we continue in step (2), the cops $X(1)$ and $X(2)$ always stay 
at $L^{\entry}$ and $L^{\exit}$ respectively, and $X(3)$ {\em always} stays
at a vertex in $\belongs(L)$.   (This holds because $L_i$ was chosen to be
nested directly under $L$ in Case (2b), so $L_i^{\exit}\in \belongs(L)$.)   
Also notice that
$\dist(X(3), L^{\exit})=\dist(r, L^{\exit})$ for as long as we stay in step (2),
because vertices in $\inside(L_i)$ do not count towards the distance.
In the previous proof we saw that
the distance of the robber to $L^{\exit}$ strictly decreases while we
continue in step (2).  So
$\dist(X(3), L^{\exit})$ also strictly decreases while we stay
in step (2), and so $X(3)$ never re-visits a vertex.

During step (5), the cops move to $L_i^{\entry}$ and $L_i^{\exit}$ and from 
then on will only be at vertices in $\inside(L_i)\cup \{L_i^{\exit}\}$.
Previously cops were only in $\belongs(L)$ or in $\outside(L)$.   These
two sets intersect only in $L_i^{\exit}$, which is occupied throughout
the transition by $X(3)$ (later renamed to $X(2)$).  Hence no cop can
re-visit a vertex and the 
	strategy $f$ is cop-monotone.\qed
\end{proof}

With this, we have shown that the DAG-width is at most 3.  This is tight.
\begin{lemma}
	\label{lemma:counterExample}
The exists a control-flow graph that has DAG-width at least 3.
\end{lemma}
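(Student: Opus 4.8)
The plan is to exhibit one explicit control-flow graph $G$ and to invoke Theorem~\ref{lemma:dagwidthEqv}: since the DAG-width equals the least number of cops having a \emph{cop-monotone} winning strategy, it suffices to produce a $G$ on which $2$ cops have no cop-monotone winning strategy. I would take $G$ to be the control-flow graph of a program consisting of one outer loop whose body first branches (one branch executing \texttt{continue}, the other entering an inner loop), where the inner loop body can execute \texttt{break}. Writing $a$ for the outer header, $b$ for the branch vertex, $c$ for the inner header, $d$ for the inner body, and $e$ for the point just after the inner loop, the cyclic part of $G$ consists of the edges $a\to b$, $b\to a$ (the \texttt{continue}), $b\to c$, $c\to d$, $d\to c$, $d\to e$ (the \texttt{break}), $c\to e$, and $e\to a$, together with the outer exit $a\to\texttt{stop}$. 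Thus $G$ contains two vertex-disjoint $2$-cycles $C_1=\{a,b\}$ and $C_2=\{c,d\}$ that are joined into a single strongly connected region through the edge $b\to c$ and the return path $\cdot\to e\to a$.

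First I would record the feature that makes $G$ hard: no single vertex lies on every cycle — deleting any one vertex still leaves one of $C_1,C_2$ intact — so there is no cheap ``valve'' on which a first cop can sit to reduce $G$ to a directed acyclic graph (this is exactly what keeps plain nested loops at DAG-width $2$). I would then argue that $2$ cops cannot win cop-monotonically, the intuition being a deadlock between guarding and catching. Catching the robber on $C_1$ forces both cops onto $\{a,b\}$ at some instant, but keeping $C_1$ decontaminated requires a cop on the portal $a$ (the only re-entry, via $e\to a$) \emph{while} the robber is still free in $C_2$; symmetrically for $C_2$, the portal $c$, and the edge $b\to c$. With only two cops one of these two duties must be dropped, and re-establishing it later would re-occupy a vacated vertex, which a cop-monotone strategy forbids.

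To turn this into a proof I would give the robber an explicit counter-strategy against an arbitrary cop-monotone $2$-cop strategy: the robber sits at whichever body vertex ($b$ or $d$) currently lies on a cop-light cycle, and responds to each cop landing by sliding along the cop-free edge into the other $2$-cycle; since the cops may never return to a vacated vertex, once they abandon a portal the robber can occupy the reopened cycle permanently, so the play is infinite. The main obstacle is that this lower bound is genuinely about \emph{monotonicity}: two cops \emph{can} catch the robber if allowed to re-visit vertices (indeed $G$ admits a non-monotone $2$-cop capture), so a plain haven/escape argument against unrestricted cops is unavailable. The crux, and the step I expect to be most delicate, is therefore to show that \emph{every} winning $2$-cop play on $G$ must re-enter some previously vacated vertex — i.e.\ to exploit the cop-monotone restriction rather than mere connectivity. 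A clean way to discharge this is a short exhaustive case analysis on which of $C_1,C_2$ is decontaminated last, using that decontaminating a directed $2$-cycle whose vertices are still contaminated requires both of them to be occupied simultaneously.
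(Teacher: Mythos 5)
Your overall reduction is sound: by Theorem~\ref{lemma:dagwidthEqv} it suffices to exhibit a control-flow graph on which two cops have no \emph{cop-monotone} winning strategy, and your nested-loop-with-\texttt{continue}-and-\texttt{break} graph is a legitimate control-flow graph. But the proof has a genuine gap exactly where you flag it: the claim that \emph{every} monotone $2$-cop strategy must revisit a vertex is the entire content of the lemma, and neither your robber counter-strategy nor your closing case analysis establishes it. Your robber strategy (``slide along the cop-free edge into the other $2$-cycle whenever a cop lands'') breaks down as soon as the cops permanently occupy the portals $a$ and $c$: then the robber cannot toggle at all, and the real question becomes whether the cops can finish the endgame --- e.g.\ pin the robber at $b$ with cops on $a$ and $c$ and then land on $b$ --- without reopening a cycle through a vacated vertex. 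Your two supporting claims are also not literally true in the visible-robber game: catching the robber ``on $C_1$'' does not force both cops onto $\{a,b\}$ simultaneously (the cops may instead immobilize the robber at $b$ using $a$ and $c$), and a contaminated directed $2$-cycle need not be cleared by occupying both its vertices at once if the robber can be trapped on one of them. So the ``short exhaustive case analysis'' you defer to would have to enumerate all monotone cop plays, including these trap configurations, and that analysis is missing; moreover, because two \emph{non-monotone} cops do win on your graph, there is no shortcut via a haven argument.

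The paper sidesteps this difficulty entirely by choosing a larger example (Fig.~\ref{fig:counterexample}) in which each loop region contains a directed $3$-cycle plus a buffer vertex on the escape path. There the robber defeats two cops \emph{without any monotonicity restriction}: the invariant ``robber at $5$ or $6$, at most one cop in $\{5,6,7,8\}$'' can always be restored using the $3$-cycle when only one cop threatens the region, and the moment both cops commit to one region the connecting path to the other region is cop-free. Correctness is then a local check of the invariant rather than a quantification over all monotone strategies. If you want to keep your smaller graph, you must supply the full monotonicity case analysis; otherwise, enlarging each loop so that the robber beats two unrestricted cops (as the paper does) is the cleaner route, since ``no winning strategy at all'' trivially implies ``no cop-monotone winning strategy.''
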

\begin{proof}
	By Theorem~\ref{lemma:dagwidthEqv}, it suffices to show that
	the robber player has a winning strategy against two cops.
	We use the graph from Fig.~\ref{fig:counterexample} and
	the following strategy:
	\begin{enumerate}
\item Start on vertex $5$.  We maintain the invariant that
	at this point the robber is at $5$ or $6$, and
	there is at most one cop on vertices $\{5,6,7,8\}$.
	This holds initially when the cops have not been placed yet.
\item If (after the next helicopter-landing) there will still be at most one
	cop in $\{5,6,7,8\}$, then move such that afterwards the robber
	is again at $5$ or $6$.  (Then return to (1).)
	The robber can always get to one of $\{5,6\}$ as follows:
	If no cop comes to where the robber is now,
	then stay stationary.  If one does, then get to the other
	position using cycle $5\rightarrow 6\rightarrow 7\rightarrow 5$;
	this cannot be blocked since one cop is moving to the robbers
	position and only one cop is in $\{5,6,7,8\}$ afterwards.
\item If (after the next helicopter-landing) both cops will be in
	$\{5,6,7,8\}$, then ``flee'' to vertex $9$ along the directed path 
	$\{5\mbox{ or }6\} \rightarrow 8 \rightarrow 1 \rightarrow 2 \rightarrow 9$.  

\item Repeat the above steps with positions $\{9,10\}$, cycle $\{9,10,11\}$ 
	and escape path $\{9\mbox{ or }10\}\rightarrow 12\rightarrow 1\rightarrow 2
	\rightarrow 5$ symmetrically.
	\end{enumerate}
	
Thus the robber can evade capture forever by toggling 
between the two loop elements $L_1$ and $L_2$ and hence has a winning strategy.\qed
\end{proof}

In summary:
\begin{theorem}
	\label{theorem:cfgDagwidth}
	The DAG-width of control-flow graphs is at most $3$ and this is tight for some control-flow graphs.
\end{theorem}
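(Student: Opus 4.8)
The plan is to assemble the theorem from the lemmas already established, treating its two halves separately. For the upper bound, I would invoke the cops-and-robber characterization of DAG-width (Theorem~\ref{lemma:dagwidthEqv}): to certify that the DAG-width of a control-flow graph $G$ is at most $3$, it suffices to exhibit a cop-monotone winning strategy for the cop player in the $3$-cops and robber game on $G$. The strategy $f$ described at the start of Section~\ref{sec:proof} uses exactly three cops, so the two remaining obligations are that it catches the robber and that it never re-visits a vacated vertex. These are precisely Lemma~\ref{lemma:winningStrategy} and Lemma~\ref{lemma:monotoneStrategy}; together with Theorem~\ref{lemma:dagwidthEqv} they yield DAG-width at most $3$.

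For tightness, I would exhibit a single control-flow graph whose DAG-width is at least $3$, so that the upper bound cannot be improved in general. This is exactly Lemma~\ref{lemma:counterExample}, which produces a graph built from two loop elements on which the robber can evade two cops forever by toggling between the two loops. Combining this lower bound with the upper bound above, the graph of Fig.~\ref{fig:counterexample} has DAG-width exactly $3$, establishing that the bound is tight.

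The only genuinely non-routine content lies in the lemmas being cited rather than in the assembly itself. The hard part is Lemma~\ref{lemma:winningStrategy}: while progress is intuitive whenever the strategy descends into a more deeply nested loop (Step~(5)), it is far less obvious that progress is made in the ``flat'' case where the robber stays within $\inside(L_i)$ or at the same nesting level inside $\belongs(L)$. This is handled by the distance function $\dist(v, L^{\exit})$ and its monotone decrease guaranteed by Lemma~\ref{lem:move}, whose proof in turn relies on Corollary~\ref{cor:cycle} to discard directed cycles in the concatenated path $P \cup P'$ without losing any $\belongs(L)$-vertices. Verifying cop-monotonicity (Lemma~\ref{lemma:monotoneStrategy}) is the second subtlety: one must check that the cop regions $\belongs(L)$ and $\outside(L)$ overlap only in the exit vertex $L_i^{\exit}$, which is held fixed during the hand-off in Step~(5). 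Once these facts are in hand, the theorem follows immediately by combining the two bounds.
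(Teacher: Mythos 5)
Your proposal matches the paper's own proof exactly: the theorem is stated as a summary of Section~\ref{sec:proof}, with the upper bound following from the three-cop strategy $f$ via Lemma~\ref{lemma:winningStrategy}, Lemma~\ref{lemma:monotoneStrategy}, and Theorem~\ref{lemma:dagwidthEqv}, and tightness from Lemma~\ref{lemma:counterExample}. Your identification of Lemma~\ref{lem:move} and the $\belongs(L)$/$\outside(L)$ hand-off as the genuinely delicate points is also accurate.
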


\begin{figure}[t!]
	\centering
	\captionsetup[sub float]{captionskip=5pt}
	\subfloat[A control flow graph $G$]{
		\includegraphics[scale=0.35]{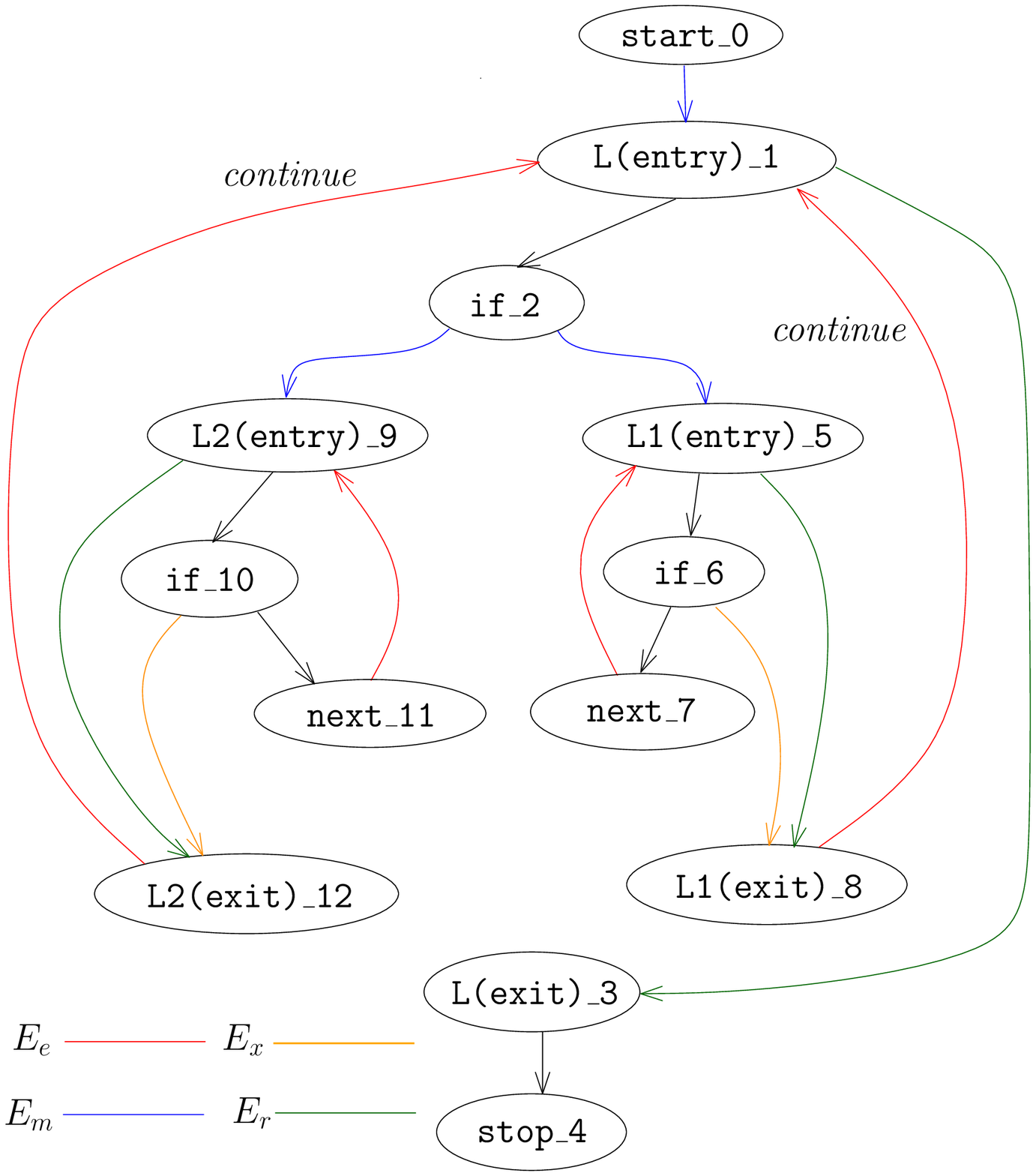}
		\label{fig:counterexample}
	}\hfill
	\subfloat[DAG Decomposition of G]{
		\includegraphics[scale=0.32]{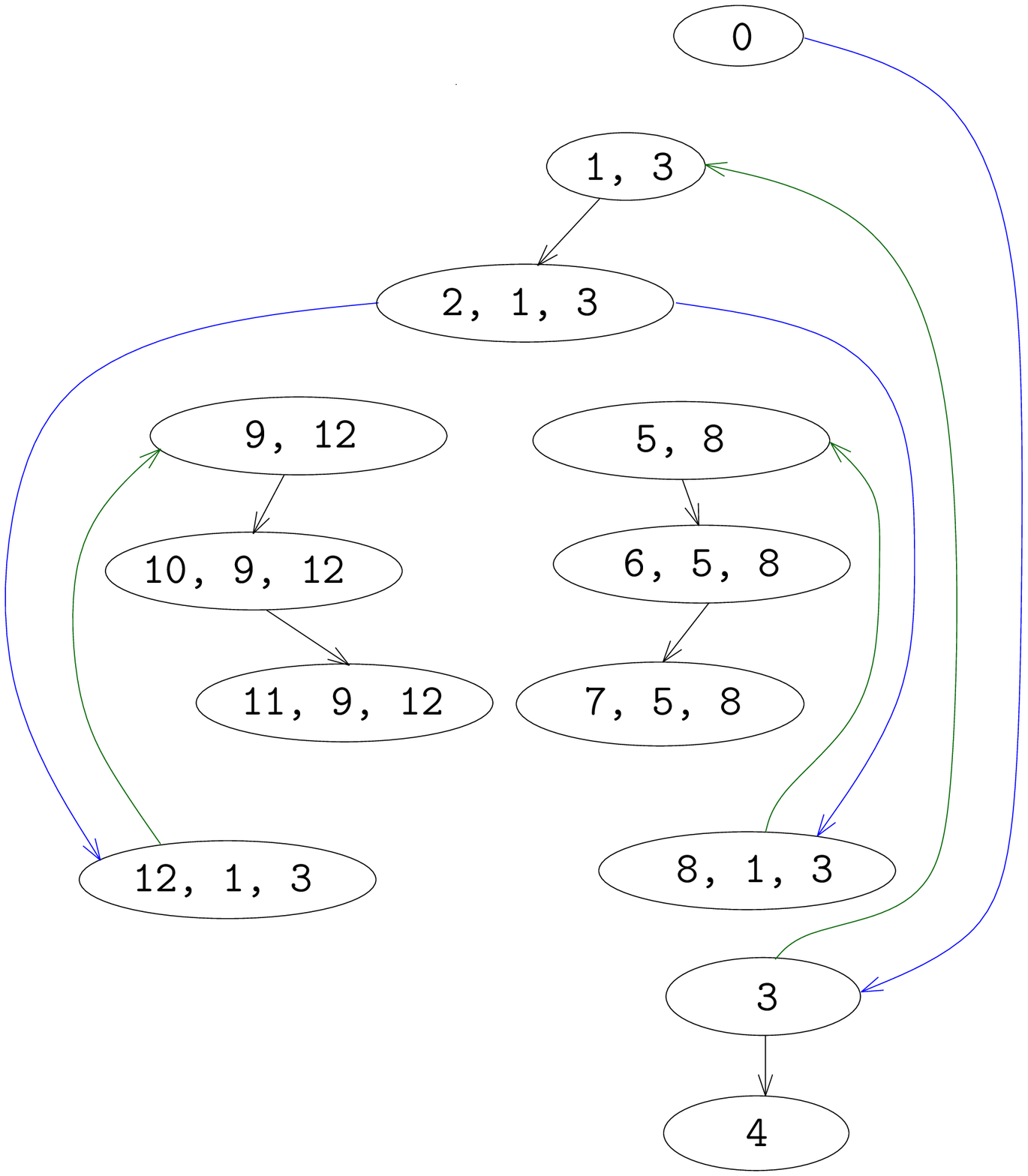}
		\label{fig:decomposition}
	}
	\caption{The robber player has a winning strategy on $G$ against two cops}
\end{figure}

\section{Computing the DAG decomposition }
\label{sec:decompose}

We already showed that control-flow graphs have DAG-width at most 3 (Theorem~\ref{theorem:cfgDagwidth}). In this section we
show how we can construct an associated DAG decomposition with few edges.
\subsection{DAG-width}
We first state precise definition of DAG-width, for which we need the following notation.
For a directed acyclic graph (DAG) $D$, 
use $u \preceq_D v$ to denote that there is a directed path 
from $u$ to $v$ in $D$.
\begin{definition} [DAG Decomposition] Let $G = (V, E)$ be a directed graph. A DAG decomposition of $G$ consists of a DAG $D$
	and an assignment of bags $X_i \subseteq V$ to every node $i$ of $D$ such that:
	\begin{enumerate}
		\item \emph{(Vertices covered)} $\bigcup X_i = V$.
		
		\item \emph{(Connectivity)} 
For any $i\preceq_D k \preceq_D j$ we have
			$X_i \cap X_j \subseteq  X_k$.
		
\item \emph{(Edges covered)} 
\begin{enumerate}
\item For any source $j$ in $D$, any $u \in X_j$, and any edge $(u, v)$ 
in $G$, there exists a successor bag $X_k$ of $X_j$ with $v \in X_k$.
\item For every arc $(i, j)$ in $D$, any $u \in (X_j \setminus X_i)$, 
and any edge $(u, v)$ in $G$, there exists a successor-bag $X_k$ of $X_j$
with $v \in X_k$.
\end{enumerate}
Here a \emph{successor-bag} of $X_j$ is a bag $X_k$ with $j \preceq_D k$.
	\end{enumerate}
\end{definition}

Note that for the ease of understanding we have rephrased the edge-covering condition of 
the original in~\cite{berwanger2012dag}. 
(Appendix~\ref{appendix:eqv} proves the equivalence.)

\subsection{Constructing a DAG decomposition}
While we already know that the DAG-width of control-flow graphs is at most
3, we do not know the DAG-decomposition and its number of edges $M$ that
is needed for the runtime for~\cite{fearnley2011time}.
There is a method to get the DAG decomposition from a winning strategy for the
$k$-cops and robber game~\cite{berwanger2012dag},
but it only shows $M \in O(n^{k+2})$. 
We now construct a DAG decomposition for control-flow graphs 
directly. Most importantly, it has $M \in O(n)$ edges,
thereby making~\cite{fearnley2011time} even more efficient for control-flow graphs.

Let $G = (V, E)$ be a control-flow graph. We present the following algorithm to construct a DAG decomposition $(D, X)$ of 
$G$.

\begin{Algorithm}[Construct the DAG]
	\label{algo:constructDAG}
	\begin{enumerate}
	\item Start with $D = G$. That is $V(D) = V$ and $E(D) = E$.
	
	\item  Remove all backward arcs.  Thus, 
	let $E_e$ be all backward edges of $G$;  
	recall that each of them connects from a node $v\in \belongs(L)$ 
		to $L^{\entry}$, for some loop element $L$.
	Remove all arcs corresponding to edges in $E_e$ from $D$.
		
	\item Remove all arcs leading to a loop-exit.  Thus, let $E_x$ be
	all edges $(u, v)$ in $G$ such that 
	$u \in \belongs(L)$ and $v = L^{\exit}$ for some loop element $L$.
	Recall that these arcs are attributed to \texttt{break} statements.
	Remove all arcs corresponding to edges in $E_x$ from $D$.
	
	\item Re-route all arcs leading to a loop-entry.  Thus, let $E_m$ be 
	all edges $(u, v)$ in $G$ such that 
	$u \in \outside(L) \setminus L^{\exit}$ and $v = L^{\entry}$ for some loop 
	element $L$.    For each such edge, remove the corresponding 
	edge in $D$ and replace it by an arc $(u,L^{\exit}$).  Let $A_m$ be
	those re-routed arcs.
	Note that now $\indeg_D(L^{\entry}) = 0$ since we also 
	removed all backward edges.

	\item Reverse all arcs surrounding a loop.  Thus, let
	$E_r$ be the edges in $G$ of the form $(L^{\entry}, L^{\exit})$ for some
	loop element $L$.  For each edge, reverse the corresponding
	arc in $D$.  Let $A_r$ be the resulting arcs.
	\end{enumerate}
\end{Algorithm}

Note that there is a bijective mapping $\mathcal{D}$ from the vertices 
in $G$ to the nodes in $D$.
For ease of representation, assume that $v_1, v_2, .., v_n$ are
the vertices of $G$ and $1, 2, .., n$ are the corresponding nodes in $D$. 
%
We now fill the bags $X_i$:

\begin{quotation}
For every  vertex $v_i \in V$, set $X_i \coloneqq \{v_i, L^{\entry}, L^{\exit}\}$,
				where $L$ is the loop element such that $v_i \in \belongs(L)$.
\end{quotation}

A sample decomposition is shown in Figure~\ref{fig:decomposition}.  Clearly the construction can be done in linear time and
digraph $D$ has $O(|E|)=O(n)$ edges.  One easily verifies the following:

\begin{observation}
	\label{obs:uniqueIntroduce}
For every arc $(i, j) \in E(D)$, $X_j \setminus X_i = v_j$.
\end{observation}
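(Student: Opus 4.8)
The plan is to verify the claim by a case analysis on the arcs of $D$, organized by which step of Algorithm~\ref{algo:constructDAG} created them. Write $X_i=\{v_i,L_i^{\entry},L_i^{\exit}\}$ and $X_j=\{v_j,L_j^{\entry},L_j^{\exit}\}$, where $v_i\in\belongs(L_i)$ and $v_j\in\belongs(L_j)$. Since $v_j\in\belongs(L_j)\subseteq\inside(L_j)$ while $L_j^{\exit}\in\outside(L_j)$, we always have $v_j\neq L_j^{\exit}$. To prove $X_j\setminus X_i=\{v_j\}$ it therefore suffices to show, for each arc, that (i) $v_j\notin X_i$, and (ii) $L_j^{\entry}$ and $L_j^{\exit}$ both lie in $X_i\cup\{v_j\}$.

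For a surviving forward arc $(v_i,v_j)$ (an edge of $G$ left untouched by steps 2--5) I would first argue that $v_i$ and $v_j$ belong to a common loop $L$. The point is that a surviving arc cannot cross a loop boundary: an edge leaving $\belongs(L)$ toward a deeper loop must enter through that loop's entry and is re-routed in step~4; an edge leaving $\inside(L_k)$ upward must use $L_k^{\entry}$ (a backward edge, removed in step~2 by Lemma~\ref{lem:backward}), or $L_k^{\exit}$ (removed in step~3), or lead to \texttt{stop}. Hence $L_i=L_j=L$, so (ii) holds because $X_i$ and $X_j$ share $L^{\entry},L^{\exit}$; and (i) holds because $v_j\neq v_i$, $v_j\neq L^{\exit}$ (as $v_j\in\inside(L)$), and $v_j\neq L^{\entry}$ (else $(v_i,v_j)$ would be a backward edge, removed in step~2).

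The cross-loop arcs are exactly the re-routed arcs $A_m$ and the reversed arcs $A_r$. A re-routed arc has the form $(u,L_k^{\exit})$, coming from an original edge $(u,L_k^{\entry})$ with $u\in\outside(L_k)$; in a structured program such an edge enters $L_k$ from its direct parent $L$, so $u\in\belongs(L)$ and $L_k$ is nested directly under $L$, giving $L_k^{\exit}\in\belongs(L)$. Thus $v_i=u$ and $v_j=L_k^{\exit}$ both belong to $L$, and $X_j\setminus X_i=\{L_k^{\exit}\}$ once one checks $L_k^{\exit}\neq u$ (an exit never flows back to an entry). A reversed arc is $(L^{\exit},L^{\entry})$, so $v_j=L^{\entry}\in\belongs(L)$ while $v_i=L^{\exit}$ belongs to the parent $L'$; here $X_j=\{L^{\entry},L^{\exit}\}$ and $X_i=\{L^{\exit},L'^{\entry},L'^{\exit}\}$, whence $X_j\setminus X_i=\{L^{\entry}\}=\{v_j\}$.

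Finally, the boundary case is the hypothetical loop $L_\phi$, whose entry and exit are purely notational and are \emph{not} vertices of $G$; a vertex $v\in\belongs(L_\phi)$ (namely \texttt{start}, \texttt{stop}, and top-level statements) therefore carries the singleton bag $\{v\}$. Arcs among such vertices and all arcs into \texttt{stop} then satisfy the claim trivially, since \texttt{stop} is a sink belonging to no other bag. I expect the main work to be the structural argument of the second and third paragraphs — that a surviving or re-routed arc never crosses a loop boundary except through an entry or exit, and that re-routed/reversed arcs join a loop only to its \emph{direct} parent — which is precisely where Lemma~\ref{lem:backward} and the enumeration of structured-program successors (\emph{out}, \emph{exit}, \emph{entry}, \emph{stop}) are needed. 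Once the loop relationship between $v_i$ and $v_j$ is pinned down, conditions (i) and (ii) reduce to routine checks.
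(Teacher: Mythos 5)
The paper does not actually prove this observation --- it is introduced with ``One easily verifies the following'' and left to the reader --- so there is no official argument to compare against. Your case analysis by arc type (surviving forward arcs, the re-routed arcs $A_m$, the reversed arcs $A_r$, and the $L_\phi$/\texttt{stop} boundary cases) is the natural way to carry out that verification, and it is correct: the two facts you isolate, namely that $v_j\notin X_i$ and that $L_j^{\entry},L_j^{\exit}\in X_i\cup\{v_j\}$, are exactly what is needed, and your structural claims (a surviving arc never crosses a loop boundary; an arc in $E_m$ originates in $\belongs(L)$ for the direct parent $L$ of the entered loop, so that $L_k^{\exit}\in\belongs(L)$ as well) follow from the paper's enumeration of statement successors (\emph{out}, \emph{exit}, \emph{entry}, \emph{stop}) together with Lemma~\ref{lem:backward}, and the exclusion $u\neq L_k^{\exit}$ is built into the definition of $E_m$ in Step~4. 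One sentence in your second paragraph is slightly garbled (an edge to $L_k^{\entry}$ from inside $L_k$ does not \emph{leave} $\inside(L_k)$; the edges that leave it go to $L_k^{\exit}$ or to \texttt{stop}), but this does not affect the conclusion. Your explicit treatment of $L_\phi$, whose entry and exit are not vertices of $G$ so that top-level bags are singletons, is a detail the paper glosses over and is worth having on record.
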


It remains to show that $(D, X)$ is a valid DAG decomposition.
\begin{enumerate}
	\item \textbf{$D$ is a DAG}. 
We claim that $G-E_e$ is acyclic.  For if it contained a directed cycle $C$,
then let $L$ be a loop element with $C\subseteq \inside(L)$, but $C\not\subseteq 
\inside(L_i)$ for any loop element $L_i$ nested under $L$.  Therefore $C$
contains a vertex of $\belongs(L)$. By Corollary~\ref{cor:cycle} then $L^{\entry}$
belongs to $C$, so $C$ contains a backward edge.  This is impossible since we
delete the backward edges.

Adding arcs $A_m$ cannot create a cycle since each arc in it is a shortcut
for the 2-edge path from outside $L$ to $L^{\entry}$ to $L^{\exit}$.
In $G-E_e-E_x$ there is no directed path from $L^{\entry}$ to $L^{\exit}$, since
such a path would reside inside $L$, and the last edge of it belongs to $E_x$.
In consequence adding arcs $A_r$ cannot create a cycle either.  Hence $D$ is acyclic.
\item \textbf{Vertices Covered}.  By definition each $v_i$ is contained in its bag $X_i$.
\item \textbf{Connectivity}.  Let $i\preceq_D k \preceq_D j$ be three nodes
	in $D$.  Recall that their three bags are
	$\{v_i,L_i^{\entry},L_i^{exit}\}$,
	$\{v_k,L_k^{\entry},L_k^{exit}\}$ and
	$\{v_j,L_j^{\entry},L_j^{exit}\}$, where $L_i,L_j,L_k$ are the
	loop elements to which $v_i,v_j,v_k$ belong.
	Nothing is to show unless $X_i\cap X_j\neq \emptyset$, which 
	severely restricts the possibilities:
\begin{enumerate}
\item 
Assume first that $L_i = L_j=L$.  Thus $v_i$ and $v_j$ belong to the
same loop element, and by the directed path between them, so does $k$.
So the claim holds since $X_i\cap X_j = \{L^{\entry},L^{\exit}\}$.

\item If $L_i \ne L_j$, then the intersection can be non-empty only if
$v_i = L_j^{\exit}$ (recall that $i\preceq_D j$).  But then
$X_i\cap X_j=\{L_j^{\exit}\}$, and the path from $i$ to $j$ must go
from $\mathcal{D}(L_j^{exit})$ to $\mathcal{D}(L_j^{\entry})$ to $j$.  
It follows that $v_k$ also belongs to $L_j$ and so $L_j^{\exit} \in X_k$
and the condition holds.
\end{enumerate}

	\item \textbf{Edges Covered}.
We only show the second condition; the first one is similar
	and easier since \texttt{start} is the only source.  Let $(i,j)$ be an arc in $D$.
By Observation~\ref{obs:uniqueIntroduce}, 
		$v_j$ is the only possible vertex in $X_j \setminus X_i$.
		Let $e = (v_j, v_l)$ be an edge of $G$. We have the following cases:
		\begin{enumerate}
			\item If $e \in (E_e \cup E_x \cup E_r)$, then $v_l \in \{L^{\entry}, L^{\exit}\}$ and $X_j = \{v_j, L^{\entry}, L^{\exit}\}$
				itself can serve as the required successor-bag.
			\item If $e \in E_m$, then $v_l = L^{\entry}, v_j \in \outside(L)$. 
				We re-routed $(v_j,L^{\entry})$ as arc $(j,\mathcal{D}(L^{\exit})$
				and later added an arc $(\mathcal{D}(L^{\exit}),\mathcal{D}(L^{\entry}))$,
				so $l$ is a successor of $j$ and $X_l$ can serve  as the
				required successor-bag.
			\item Finally, if $e \in E \setminus (E_e \cup E_x \cup E_r \cup E_m)$, 
				then $(j, l)$ is an arc in $D$ and $X_l$ is the required successor-bag.
		\end{enumerate}
\end{enumerate}

We conclude:

\begin{theorem}
	\label{theorem:result}
	Every control-flow graph $G = (V, E)$ has a DAG decomposition of 
width 3 with $O(|V|)$ vertices and edges.
	It can be found in linear time.
\end{theorem}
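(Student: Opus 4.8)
The plan is to present the decomposition $(D,X)$ produced by Algorithm~\ref{algo:constructDAG} together with the bag assignment $X_i := \{v_i, L^{\entry}, L^{\exit}\}$ (where $L$ is the loop element to which $v_i$ belongs), and then verify that it meets every requirement in the statement. Because each bag contains at most three vertices, the width is $3$ by inspection, so essentially all the effort goes into checking the three defining axioms of a DAG decomposition and then into the (easy) accounting of size and running time.

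I would verify the axioms in the order \emph{acyclicity}, \emph{vertices covered}, \emph{connectivity}, \emph{edges covered}. Vertices-covered is immediate from $v_i\in X_i$. For connectivity I would note that two bags $X_i, X_j$ can intersect only when $v_i$ and $v_j$ lie in a common loop element $L$ (forcing any intermediate node $k$ on an $i\preceq_D k\preceq_D j$ path to also belong to $L$, so that $\{L^{\entry},L^{\exit}\}\subseteq X_k$), or when the intersection is a single shared entry/exit vertex that again pins the intermediate node to the right loop. For edge-covering I would lean on Observation~\ref{obs:uniqueIntroduce}, which tells us $X_j\setminus X_i=\{v_j\}$ for each arc $(i,j)$; the condition then reduces to a short case analysis over which class ($E_e$, $E_x$, $E_r$, $E_m$, or an untouched forward edge) an outgoing edge $(v_j,v_l)$ belongs to, and in each case either $X_j$ itself or a suitable successor bag $X_l$ already contains $v_l$.

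The main obstacle is \emph{acyclicity}, because the construction deletes, re-routes, and in particular \emph{reverses} several classes of arcs, and reversal is exactly the operation that threatens to create cycles. Here Corollary~\ref{cor:cycle} is decisive: any directed cycle confined to $\inside(L)$ (but to no nested loop) must contain $L^{\entry}$ and therefore a backward edge, so once the backward edges $E_e$ are removed the graph is already acyclic. I would then argue that the remaining operations preserve this property: the shortcut arcs $A_m$ merely abbreviate two-edge paths, and since $G-E_e-E_x$ contains no directed path from $L^{\entry}$ to $L^{\exit}$ (the last edge of such a path would lie in $E_x$), reversing those arcs into $A_r$ cannot close a cycle either.

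Finally, the size and timing claims follow with little work: $D$ has exactly $|V|$ nodes, and since control-flow graphs have $O(|V|)$ edges by Thorup's treewidth bound~\cite{thorup1998all}, $D$ has $O(|V|)$ arcs as well, so $M\in O(|V|)$. Each algorithm step and each bag assignment inspects every vertex and edge a constant number of times, yielding a linear-time construction and completing the theorem.
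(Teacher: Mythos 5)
Your proposal is correct and follows essentially the same route as the paper: it uses the decomposition from Algorithm~\ref{algo:constructDAG} with the three-element bags, establishes acyclicity via Corollary~\ref{cor:cycle} plus the shortcut/no-entry-to-exit-path arguments, and verifies connectivity and edge-covering by the same case analyses built on Observation~\ref{obs:uniqueIntroduce}. The size and running-time accounting (linearly many edges via Thorup's treewidth bound) also matches the paper's argument.
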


\comment{TB: I've thrown out all mentioning of ``nice''.  The proof you give
uses nothing but the face that we have $O(n)$ edges in the DAG-decomposition,
and hence is rather trivial.}


\section{Conclusion}
\label{sec:conclusion}
In this paper, 
we showed that control-flow graphs have DAG-width at most $3$.  Our proof comes
with a linear-time algorithm to find such a DAG decomposition, and it has
linear size.  Since algorithms that are tailored to small DAG-width are 
typically exponential in the DAG-width, this should
improve the run time of such algorithms for control-flow graphs. 
The specific application that motivated this paper was the
DAG-width based algorithm for parity games from~\cite{fearnley2011time};
using our DAG decomposition should turn this into  a more practical
algorithm for software model checking.
 (See Appendix~\ref{sec:parityGames} for more details).
The run-time is still rather slow for large $n$. One natural open problem
is hence to develop even faster algorithms for parity games on digraphs
that come from control flow graphs. Our 
simple DAG-decomposition that is directly derived from the control
flow graph might be helpful here.

Our result also opens directions for future research in other 
related application areas.  For example, can we use the small
DAG-width of control-flow graphs for  faster analysis of the 
worst-case execution time 
(which is essentially a variant of the longest-path problem)?

\InputIfFileExists{paper.bbl}

\newpage
\appendix

\section{Additional Examples}
\label{appendix:example}
\begin{Example}
\textit{Winning strategy $f$ from Section~\ref{sec:proof} applied on graph in Figure~\ref{fig:counterexample}.}\\
	
	We will refer to the vertices
by their indices. Suppose that the initial position of the robber is vertex $0$, and the robber plays a lazy 
strategy, that is, he stays where he is unless a cop comes there, otherwise, he moves to the closest cop-free vertex. 
Then, the following two sequences of positions are possible. Note that the labels on the transitions represent
the corresponding steps of the strategy $f$ whereas $X(i) = \phi$ indicates that the cop $X(i)$ was not used.

\begin{itemize}
	\item[I.] $\xrightarrow{~~1~~} (\{\phi, \phi, \phi \}, 0)$
		  $\xrightarrow{~2a~} (\{\phi, \phi, 0\}, 1)$
		  $\xrightarrow{~2b~} (\{\phi, \phi, 3\}, 1)$
		  $\xrightarrow{~5~} (\{1, \phi, 3\}, 2)$
		  $\xrightarrow{~2a~} (\{1, 3, 2\}, 9)$
		  $\xrightarrow{~2b~} (\{1, 3, 12\}, 9)$
		  $\xrightarrow{~5~} (\{9, 3, 12\}, 10)$
		  $\xrightarrow{~2a~} (\{9, 12, 10\}, 11)$
		  $\xrightarrow{~2a~} (\{9, 12, 11\}, 11)$
	  	  $\xrightarrow{~4a~}$ \emph{stop}

	  \item[II.] $\xrightarrow{~~1~~} (\{\phi, \phi, \phi \}, 0)$ 
		  $\xrightarrow{~2a~} (\{\phi, \phi, 0\}, 1)$
		  $\xrightarrow{~2b~} (\{\phi, \phi, 3\}, 1)$ 
		  $\xrightarrow{~5~} (\{1, \phi, 3\}, 2)$
          $\xrightarrow{~2a~} (\{1, 3, 2\}, 5)$
          $\xrightarrow{~2b~} (\{1, 3, 8\}, 5)$ 
		  $\xrightarrow{~5~} (\{5, 3, 8\}, 6)$
		  $\xrightarrow{~2a~} (\{5, 8, 6\}, 7)$ 
		  $\xrightarrow{~~2a~~}$
		  $(\{5, 8, 7\}, 7) \xrightarrow{~4a~}$ \emph{stop}

\end{itemize}
\end{Example}

\section{Properties of DAG-width}
\label{appendix:dagWidthProperties}

	\label{appendix:eqv}
We aim to show that our edge-covering condition is equivalent to the one given by Berwanger et al.
\cite{berwanger2012dag}.  We first review their concepts.

\begin{definition}[Guarding] 
Let $G = (V, E)$ be a digraph and $W, V^\prime \subseteq V$. We say that $W$ \emph{guards} $V^\prime$ 
if, for all $(u,v) \in E$, if $u \in V^\prime$ then $v \in V^\prime \cup W$.
\end{definition}

The original edge-covering condition was the following:
\begin{quotation}
(D3) For all edges $(d,d')\in E(D)$, $X_d\cap X_{d'}$ guards $X_{\succeq d'}\setminus X_d$, where
	$X_{\succeq d'}$ stands for $\bigcup_{d'\preceq_D d''} X_{d''}$.
For any source $d$, $X_{\succeq d}$ is guarded by $\emptyset$.
\end{quotation}
For easier comparison we re-state here our edge-covering condition:
\begin{quotation}
(3a) For any edge $(i,j)$ in $E(D)$, any vertex $u\in X_j\setminus X_i$, and any edge $(u,v)$ in $G$,
there exists a successor-bag $X_k$ of $X_j$ that contains $v$.

(3b) For source $j$ in $D$, any vertex $u\in X_j$, and any edge $(u,v)$ in $G$,
there exists a successor-bag $X_k$ of $X_j$ that contains $v$.
\end{quotation}
We will only show that the first half of (D3) is equivalent to (3a); one can
similarly show that the second half of (D3) is equivalent to (3b).
We first re-phrase (D3) partially by switching to our notation, and partially
by inserting the definition of guarding; clearly (D3') is equivalent to the
first half of (D3).
\begin{quotation}
(D3') For any edge $(i,j)$ in $E(D)$, any vertex $u\in X_{\succeq j}\setminus X_i$
and any edge $(u,v)$ in $G$, we have $v\in X_{\succeq j} \cup (X_i \cap X_j)$.
\end{quotation}
But $X_i\cap X_j \subseteq X_j \subseteq X_{\succeq j}$, so we can immediately
simplify this again to the following equivalent:
\begin{quotation}
(D3'') For any edge $(i,j)$ in $E(D)$, any vertex $u\in X_{\succeq j}\setminus X_i$
and any edge $(u,v)$ in $G$, we have $v\in X_{\succeq j}$.
\end{quotation}
At the other end, we can also simplify (3a), since we now have the shortcut
$X_{\succeq j}$ for vertices in a successor-bag of $X_j$.
\begin{quotation}
(3a') For any edge $(i,j)$ in $E(D)$, any vertex $u\in X_j\setminus X_i$, and any edge $(u,v)$ in $G$,
we have $v\in X_{\succeq j}$.
\end{quotation}
Thus (D3'') and (3a') state nearly the same thing, except that for
(D3'') the claim must hold for significantly more vertices $u$.  As such,
(D3'')$\Rightarrow$(3a') is trivial since $X_j\setminus X_i\subseteq X_{\succeq j}\setminus X_i$.

For the other direction, we need to work a little harder.  Assume (3a') holds.
To show (D3''), fix one such choice
of edge $(i,j)$ in $E(D)$ and $(u,v)$ in $E(G)$ with $u\in X_{\succeq j}\setminus X_i$.
We show that $v\in X_{\succeq j}$ using induction one the number of successors of
$j$ in $D$.  If there are none, then $X_{\succeq j}=X_j$ and (D3'') holds since (3a') does.
Likewise (D3'') holds if $u\in X_j\setminus X_i$ since (3a') holds.  This leaves the case where
$u\in (X_{\succeq j}\setminus X_j)$.  Thus $u$ belongs to some strict successor bag of $X_j$,
and hence there exists an arc $(j,k)$ with $u\in (X_{\succeq k}\setminus X_i)$.  
Node $k$ has
fewer successors than $j$, and so by induction (D3'') holds for edge $(j,k)$.
We know $u\in (X_{\succeq k}\setminus X_i)$ and $u\not\in X_j\setminus X_i$, so
$u\in (X_{\succeq k}\setminus X_j)$.  So applying (D3'') we know $v\in X_{\succeq k}
\subseteq X_{\succeq j}$ and hence (D3'') also holds for edge $(i,j)$.

\newpage
\section{Application to Software Model Checking}
\label{sec:parityGames}

In this section we will discuss how exactly the DAG-width based algorithm from~\cite{fearnley2011time}
for solving parity games is used for software model checking 
(which is essentially the $\mu$-calculus model checking problem on control-flow graphs).
We start with briefly discussing the modal $\mu$-calculus, parity games
and how to convert the $\mu$-calculus model checking problem 
to the problem of finding a winner in parity games. 

\subsection{Parity Game}
A \emph{parity game} $\mathcal{G}$ consists of a directed graph $G = (V, E)$ called
\emph{game graph} and a parity function $\lambda: V \rightarrow \mathbb{N}$ (called
\emph{priority}) that assigns
a natural number to every vertex of $G$.

The game $\mathcal{G}$ is played between two players $P_0$ and $P_1$ who move
a shared token along the edges of the graph $G$. The vertices $V_0 \subset V$ 
and $V_1 = V \setminus V_0$ are assumed to be owned by $P_0$ and $P_1$ respectively.
If the token is currently on a vertex in $V_i$ (for $i=0,1$), then player $P_i$ 
gets to move the token, and moves it to a successor of his choice.
This results in a possibly infinite sequence of
vertices called \emph{play}. If the play is finite, the player who is unable to move
loses the game. If the play is infinite, $P_0$ wins the game if the largest occurring
priority is even, otherwise $P_1$ wins.

A solution for the parity game $\mathcal{G}$ is a partitioning of $V$ into $V_0^{w}$ and
$V_1^{w}$, which are respectively the vertices from which $P_0$ and $P_1$ have a 
winning strategy. Clearly, $V_0^{w}$ and $V_1^{w}$ should be disjoint.

\subsection{Modal $\mu$-calculus}
The modal $\mu$-calculus (see~\cite{Stirling:2001} for a good introduction) is a fixed-point logic comprising a set of formulas defined by the
following syntax:
\begin{align*}
	\phi ::= X \mid \phi_1 \wedge \phi_2 \mid \phi_1 \vee \phi_2 \mid [\cdot]\phi \mid \langle \cdot \rangle \phi \mid \nu X .\phi \mid \mu X .\phi
\end{align*}

Here, $X$ is the set of propositional variables, and $\nu, \mu$ are maximal and minimal fixed-point operators respectively.
The \emph{alternation depth} of a formula is the number of syntactic alternations between the maximal fixed-point operator,
$\nu$, and the minimal fixed-point operator, $\mu$.

Given a formula $\phi$, we say that a $\mu$-calculus formula $\psi$ is a subformula of $\phi$, if we can obtain $\psi$ from $\phi$
by recursively decomposing as per the above syntax. For example, the formula $\nu X (P \wedge X)$ has four subformulas: $\nu X (P \wedge X)$,
$(P \wedge X)$, $P$ and $X$. The \emph{size} of a formula is the number of its subformulas.

\subsection{$\mu$-calculus Model Checking to Parity Games}

A model $M = (S, T)$ is represented as a digraph with the set of states $S$ as vertices
and the transitions $T$ as edges.  The {\em $\mu$-calculus model checking problem}
consists of testing whether  a given modal-$\mu$-calculus formula $\phi$ applies to $M$.
As mentioned earlier, given $M$ and $\phi$, there exists a way to construct a
parity game  instance $\mathcal{G} = (G, \lambda)$ such that $\phi$ applies if and
only if the parity game can be solved.
See e.g.~\cite{Stirling:2001} \footnote{Alternatively, see pages 20-23: Obdr\v{z}\'{a}lek, Jan. 
Algorithmic analysis of parity games. PhD thesis, University of Edinburgh, 2006}
We note the following relevant points of this transformation:

	\begin{enumerate} [ref={\theObservation.\arabic*}]
	\item Let $Sub(\phi)$ be the set of all subformulas of $\phi$ and $m = |Sub(\phi)|$ be the size of the formula $\phi$. 
		For every $\psi \in Sub(\phi)$ and $s \in S$ we create a vertex $(s, \psi)$ in $G$.
		Therefore, $|V(G)| = m \cdot |S|$.
	
	\item \label{obs:possibleEdges}
		For every $s \in S$, let $V_s \subseteq V(G)$ be the set of vertices of $G$ 
		$\{(s,\psi): \psi \in Sub(\phi)\}$.  Clearly, $|V_s| = m$.
		It holds that for any $s, t \in S$, there is an edge between any two vertices 
		$u \in V_s$ and $v \in V_t$ of $G$ only if $(s, t) \in T$.
		
	\item The number of priorities $d$ in $\mathcal{G}$ is equal to the alternation depth of the formula $\phi$ plus two. That is,
		for a $\mu$-calculus formula with no fixed-point operators, the number of priorities is at least $2$.
	\end{enumerate}

\subsection{$\mu$-calculus Model Checking on Control Flow Graphs}

Recall that given a $\mu$-calculus formula of length $m$ and a control-flow graph $G = (V, E)$, we can create a 
parity game graph $G^\prime$ with $m \cdot |V|$ vertices. Now, we can use either of the treewidth or DAG-width
based algorithms from~\cite{fearnley2011time} for solving the parity game on $G^\prime$. We discuss them individually.

\paragraph{Treewidth based algorithm}
Recall that this runs in $O(|V| \cdot (k+1)^{k+5} \cdot (d+1)^{3k+5})$ time where $k$ is the treewidth of the 
game graph $G^\prime$. Using Thorup's result~\cite{thorup1998all} we can obtain a tree decomposition $(\mathcal{T}, \mathcal{X})$
of $G$ with width at most $6$. This means that each bag of the tree decomposition
contains at most $7$ vertices.
Using Observation~\ref{obs:possibleEdges}, we can now obtain a tree decomposition
$(\mathcal{T}^\prime, \mathcal{X}^\prime)$ for $G^\prime$ 
from $(\mathcal{T}, \mathcal{X})$ by replacing every $s \in \mathcal{X}_i$ by $V_s$, for all $\mathcal{X}_i \in \mathcal{X}$.
Note that the width of $\mathcal{T}^\prime$ will be $7 \cdot m-1$. 

\paragraph{DAG-width based algorithm}
This runs in $O(|V| \cdot M \cdot k^{k+2} \cdot (d+1)^{3k+2})$ time where $k$ is the DAG-width of the
game graph $G^\prime$ and $M$ is the number of edges in the DAG decomposition. Using our main result 
(Theorem~\ref{theorem:result}), we can obtain a DAG decomposition $(D, X)$ of width $3$ and $M \in O(|V|)$.
As in the previous case, we can obtain a DAG decomposition of $G^\prime$ from $(D,X)$ by replacing every 
$s \in X_i$ with $V_s$, for all $X_i \in X$. Note that this will have width $3 \cdot m$ and $M \in O(|V|)$.

\medskip
We can see that even for the smallest possible values $m = 1$ and $d = 2$, the treewidth based algorithm runs in
$O(|V| \cdot 7^{11} \cdot 3^{23}) = O(|V| \cdot 10^{20})$ time. For the same values,
the DAG-width based algorithm runs in $O(|V|^2 \cdot 3^{5} \cdot 3^{11}) = O(|V|^2 \cdot 10^{7})$,
which is better unless $|V|\geq 10^{13}$.  Of course the actual run-times may 
be influenced  by the constants hidden behind the asymptotic notations, but it
is fair to assume that the DAG-width based algorithm will be faster for most
practical scenarios, especially as $m$ and $d$ increase.

\end{document}